\newtheorem{proposition}{Proposition}[section]
\newtheorem{theorem}{Theorem}[section]
\newtheorem{definition}{Definition}[section]
\newtheorem{assumption}{Assumption}[section]
\title{\textbf{The Economics of Digital Intelligence Capital: Endogenous Depreciation and the Structural Jevons Paradox}}
\author{
  \begin{tabular}{c}
    Yukun Zhang \\
    The Chinese University Of Hongkong \\
    HongKong, China \\
    \texttt{215010026@link.cuhk.edu.cn}
  \end{tabular}
  \hfill
  \begin{tabular}{c}
    Tianyang Zhang \\
    University of Macau \\
    Macau, China \\
    \texttt{yc57317@um.edu.mo}
  \end{tabular}
}
\begin{document}

\maketitle

\begin{abstract}
This paper develops a micro-founded economic theory of the AI industry by modeling large language models as a distinct asset class—Digital Intelligence Capital—characterized by data–compute complementarities, increasing returns to scale, and relative (rather than absolute) valuation. We show that these features fundamentally reshape industry dynamics along three dimensions. First, because downstream demand depends on relative capability, innovation by one firm endogenously depreciates the economic value of rivals’ existing capital, generating a persistent innovation pressure we term the Red Queen Effect. Second, falling inference prices induce downstream firms to adopt more compute-intensive agent architectures, rendering aggregate demand for compute super-elastic and producing a structural Jevons paradox. Third, learning from user feedback creates a data flywheel that can destabilize symmetric competition: when data accumulation outpaces data decay, the market bifurcates endogenously toward a winner-takes-all equilibrium. We further characterize conditions under which expanding upstream capabilities erode downstream application value (the Wrapper Trap). A calibrated agent-based model confirms these mechanisms and their quantitative implications. Together, the results provide a unified framework linking intelligence production upstream with agentic demand downstream, offering new insights into competition, scalability, and regulation in the AI economy.
\end{abstract}

\section{Introduction}
\label{sec:intro}

The rapid diffusion of large language models (LLMs) has introduced a new class of productive assets into the global economy. Unlike earlier general-purpose technologies—such as electricity, information technology, or the internet—generative AI does not merely augment existing factors of production or reduce transaction costs. Instead, it produces a scalable stock of general cognitive capability that can perform reasoning, coding, planning, and decision-making tasks traditionally associated with skilled human labor. Investment in this capability has reached unprecedented levels: annual expenditures on AI training and inference infrastructure now rival the GDP of mid-sized economies, while competitive advantage appears increasingly concentrated in a small number of firms.

Yet this rapid expansion is accompanied by a set of puzzling and seemingly contradictory economic patterns. Despite dramatic declines in the unit cost of intelligence (e.g., inference costs per token), aggregate compute demand continues to surge rather than stabilize. Downstream application developers face persistently low and fragile profit margins, even as upstream providers report high revenues. Most strikingly, market leadership in foundation models appears unusually unstable: firms that pause innovation experience rapid erosion of economic value, while small early advantages can escalate into winner-takes-all dominance. Standard economic frameworks struggle to jointly explain these phenomena.

Existing theories of technological change offer only partial guidance. In neoclassical and endogenous growth models, technology typically enters as an exogenous or semi-exogenous productivity shifter, leaving the competitive dynamics of technological capital largely unexplored. Task-based models of automation emphasize labor displacement and task reallocation but abstract from the industrial organization of AI supply. Meanwhile, the economics of software platforms focuses on near-zero marginal costs, overlooking the severe physical capacity constraints and rapid obsolescence characteristic of large-scale AI systems. As a result, the literature lacks a unified framework that links the production of intelligence upstream with the organization of agentic demand downstream.

This paper proposes such a framework. We introduce Digital Intelligence Capital as a distinct asset class and develop a dynamic, micro-founded model of the AI production stack. Digital Intelligence Capital differs fundamentally from traditional capital and knowledge in three respects. First, it is produced through a data–compute synthesis technology that exhibits strong complementarities and increasing returns to scale. Second, its economic value is relative rather than absolute: a model’s profitability depends on its capability advantage vis-à-vis competitors and the technological frontier. Third, it functions as a non-rival but capacity-constrained intermediate input into a downstream ecosystem of AI agents. These properties imply that depreciation, demand, and market structure are endogenously intertwined.

Our analysis yields three core theoretical results—which together characterize the structural dynamics of the AI economy.

First, we derive Endogenous Economic Depreciation. Because downstream demand is sensitive to relative capability, investment by one firm imposes a negative pecuniary externality on rivals by reducing the shadow value of their existing intelligence capital. We show that even a market leader experiences rapid economic depreciation if it fails to keep pace with frontier growth—a phenomenon we term the Red Queen Effect. In equilibrium, firms must continuously invest simply to maintain their relative position, generating a persistent “innovation tax” that is absent from standard capital accumulation models.

Second, we establish a Structural Jevons Paradox on the demand side. As the unit price of intelligence falls, downstream firms do not merely scale output along a fixed production function. Instead, they endogenously redesign their agent architectures—adopting deeper reasoning loops, larger memory contexts, and tool-augmented workflows—that dramatically increase compute intensity. We prove that under general conditions, this architectural response renders aggregate demand for compute super-elastic, so that efficiency gains raise total resource consumption and industry revenue.

Third, we characterize the Data Flywheel as a dynamic bifurcation mechanism. Superior models attract greater usage, which generates proprietary feedback data that further enhances model quality. We show that when data learning efficiency exceeds a critical threshold relative to data depreciation, the symmetric oligopoly becomes dynamically unstable. The industry then bifurcates into a winner-takes-all equilibrium, even in the absence of predatory pricing or entry barriers. Market concentration thus emerges as a structural property of the production technology rather than a purely strategic outcome.

In addition, we analyze the vertical relationship between foundation models and downstream applications. As upstream general capabilities expand, they may substitute for—rather than complement—downstream orchestration capital. We formalize this mechanism as the Wrapper Trap, identifying conditions under which application-layer firms face systematic erosion of value as the frontier advances.

To quantify these mechanisms beyond local analytical results, we complement the theory with a calibrated agent-based model. The numerical analysis verifies the Red Queen depreciation channel, the super-elastic demand response underlying the Jevons paradox, the existence of a flywheel stability threshold, and the long-run viability conditions for the downstream ecosystem. Importantly, the simulations introduce no additional mechanisms beyond those derived analytically; they serve to expose the global dynamics implied by the theory.

The paper proceeds as follows. Section 2 describes the economic environment and industry structure. Section 3 defines Digital Intelligence Capital and its production technology. Section 4 characterizes the static upstream market equilibrium. Sections 5–8 derive the dynamic laws governing endogenous depreciation, architectural demand shifts, data-driven market concentration, and vertical cannibalization. Section 9 presents the numerical analysis. Section 10 discusses policy implications and concludes.

\begin{figure}[htbp]
    \centering
    \includegraphics[width=\textwidth]{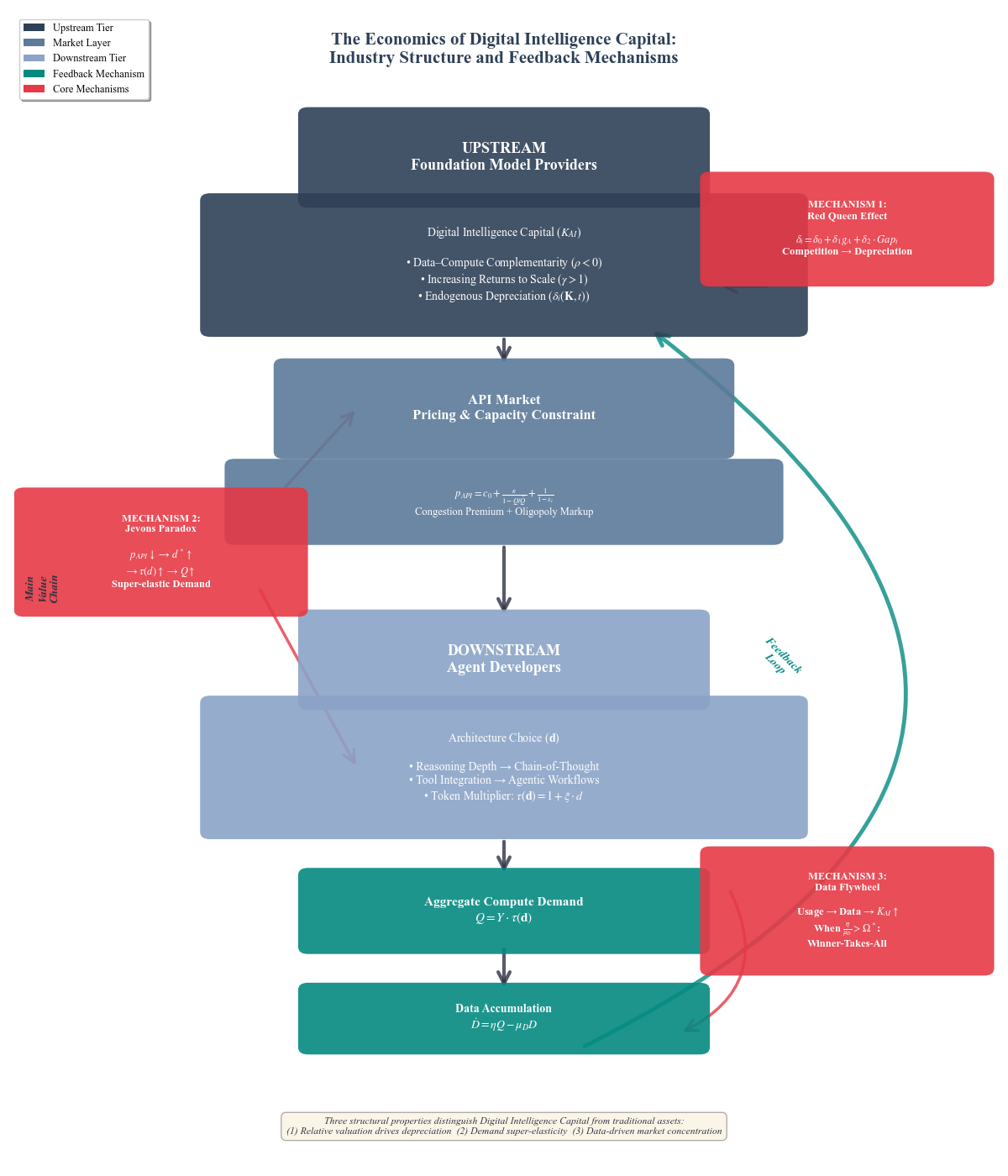}
    \caption{
    \textbf{Industry structure and feedback mechanisms of digital intelligence capital.}
    The figure illustrates the vertical structure of the AI production stack and the three core mechanisms analyzed in the paper.
    \emph{Upstream}, foundation model providers invest in digital intelligence capital $K_{AI}$, characterized by data--compute complementarity, increasing returns to scale, and endogenous economic depreciation.
    AI services are intermediated through an API market with capacity constraints and oligopolistic markups.
    \emph{Downstream}, agent developers optimally choose architecture complexity (e.g., reasoning depth, tool integration), which amplifies token usage via a token multiplier.
    Falling API prices induce more compute-intensive architectures, generating super-elastic demand (Jevons paradox).
    Aggregate usage produces proprietary feedback data, which feeds back into upstream capability accumulation.
    When data accumulation dominates data decay, the feedback loop destabilizes symmetric competition and leads to winner-takes-all outcomes.
    }
    \label{fig:industry_structure}
\end{figure}

Figure~\ref{fig:industry_structure} provides an overview of the economic environment and the feedback structure studied in this paper.
It highlights the vertical organization of the AI industry and situates the three core mechanisms—endogenous depreciation (Red Queen effect), super-elastic demand (structural Jevons paradox), and data-driven market concentration—within a unified dynamic framework.

\section{Literature}
With breakthrough advancements in Artificial Intelligence (AI) technology, insights and perspectives regarding how AI shapes the economy have expanded significantly. The fundamental understanding of how AI reshapes the economy rests on how AI enters or influences the production function. The literature reveals that AI has evolved from a simple factor of ``capital deepening'' into a disruptive force that alters the structure of production functions (task-based models), lowers prediction costs (micro-foundations), and reshapes innovation processes (ideas production functions).

Early models attempted to incorporate AI into traditional capital terms ($K$) or total factor productivity terms ($A$). In the traditional Solow growth model, technological progress is typically treated as exogenous Total Factor Productivity (TFP) or simple labor-augmenting progress. However, with the rise of Artificial Intelligence (AI), economists discovered that traditional frameworks struggled to capture the specificity of AI—it functions simultaneously as capital (machinery) and labor (capable of executing cognitive tasks), while also possessing the capacity to alter innovation itself like a General Purpose Technology (GPT) defined by \cite{bresnahan1995general}. Referring to the basic framework for intangible asset accounting by \cite{corrado2009intangible}, we have a theoretical starting point for understanding its AI model as an intermediate capital good.

Currently, the dominant framework for analyzing the impact of AI on employment and productivity is the task-based model proposed by \cite{zeira1998workers}, \cite{autor2003skill} and refined by \cite{acemoglu2018race}. This theory does not view production simply as a combination of capital and labor, but rather as a collection of tasks. Task models reveal automation's three effects—displacement, productivity gains, and reinstatement—where new task creation determines long-run labor share \citep{acemoglu2018race, acemoglu2019automation, acemoglu2022tasks}. However, caution must also be exercised regarding the trap of ``so-so automation'' resulting from excessive automation. Recent empirical studies have provided granular evidence on AI’s heterogeneous impact \citep{brynjolfsson2023generative, noy2023experimental}. This suggests that AI might not only displace tasks but also reshape the wage structure by altering the relative scarcity of cognitive skills \citep{autor2024labor}.

Another line of inquiry focuses on how AI enters the ``ideas production function,'' influencing long-term economic growth rates. AI enters the ``ideas production function'' by automating R\&D, potentially transforming linear growth into super-exponential trajectories \citep{aghion2017artificial,acemoglu2024power,korinek2023ai,jones2022past}.

Many scholars also focus on the specific functions of AI within micro-production processes. \cite{agrawal2018introduction} propose that the essence of AI in the production function is a ``precipitous drop in prediction costs.'' Furthermore, \cite{agrawal2022power} distinguish between ``point solutions'' and ``system-level solutions,'' noting that a qualitative leap in the production function occurs only when production processes are reorganized around the predictive capabilities of AI. The emphasis on system-level reorganization mirrors the seminal argument by \cite{brynjolfsson2000beyond}. The historical observation by \cite{alcott2005jevons} that improved energy efficiency can actually lead to increased total resource consumption also provides a perspective for research in the digital economy. Drawing on \cite{saunders1992khazzoom}'s theoretical argument on the elasticity of substitution within a neoclassical growth framework, the introduction of the "Structural Jevons Paradox" further expands this perspective.

Research on AI and the economy, particularly production, cannot avoid examining AI itself. Regarding AI development, computing power infrastructure represents the most critical bottleneck; the global AI server market is projected to reach 400 billion USD by 2025, with China accounting for a 40\% share yet remaining dependent on imported foundational chips \citep{stanford2025ai}. High-quality data is proving superior to model size alone; small models trained on refined data are achieving performance comparable to large models, indicating diminishing marginal returns to scale laws \citep{microsoft2025ai}. Data depletion pushes AI toward a ``data-constrained'' regime. \citet{villalobos2024data} predict high-quality language data exhaustion by 2028, requiring synthetic data and recursive training \citep{kaplan2020scalinglawsneurallanguage}. This necessitates endogenizing data as finite reproducible resource, distinct from physical capital \citep{jones2022past,korinek2023ai}. 

However, a critical theoretical gap remains. Existing literature largely bifurcates into two extremes: macro-level growth models that treat AI as a generic TFP shock ($A$), and micro-level task models that focus on labor substitution at the specific job level. \textbf{What is missing is a structural equilibrium model that connects the "production of intelligence" upstream with the "application of agents" downstream.}

Current frameworks often overlook the unique industrial organization of the AI sector—specifically, its Two-tier Production Structure. Foundation models are not final consumption goods but intermediate "Digital Intelligence Capital" subject to rapid endogenous depreciation and massive scale effects. The interaction between upstream capacity constraints (supply) and downstream architectural choices (demand) creates complex pricing and innovation dynamics that standard models fail to capture.

Therefore, this chapter seeks to bridge this gap. By endogenizing the production of foundation models and the architectural decisions of agents within a unified vertical framework, we provide a new theoretical lens to explore the \textbf{Artificial Intelligence Production Function}, explaining phenomena from the "Red Queen" innovation race to the "Jevons Paradox" in compute consumption.
\section{Economic Environment and Industry Structure}
\label{sec:environment}

We consider a continuous-time economy indexed by $t \in [0, \infty)$. The economy is characterized by a vertical industrial organization comprising two distinct tiers of production: an upstream oligopoly of foundation model providers and a downstream continuum of specialized agent developers. The market is populated by a representative household that consumes the final services produced by the downstream sector.

To focus on the supply-side dynamics of the artificial intelligence industry, we assume a small open economy setting where the risk-free interest rate $r > 0$ is exogenously given. All firms maximize the net present value (NPV) of their profit flows discounted at rate $r$.

\subsection{The Upstream Tier: Foundation Model Providers}

The upstream sector consists of a finite set of firms, denoted by $\mathcal{N} = \{1, 2, \dots, N\}$, where $N$ is a small integer reflecting the oligopolistic nature of the industry driven by high entry barriers. Each upstream firm $i \in \mathcal{N}$ engages in the production of a unique asset class, denoted by the stock variable $K_{AI,i}(t) \in \mathbb{R}_+$.

Unlike traditional physical capital, $K_{AI,i}$ represents the general cognitive capability of the foundation model, such as reasoning, coding, and language understanding. It functions as an intermediate capital good that is non-rival but partially excludable. Firm $i$ monetizes this capital by selling inference services—quantified as API tokens $Q_i(t)$—to downstream agents. Consequently, upstream firms compete strategically in both prices and R\&D investment to maximize intertemporal profits, subject to the capacity constraints of their physical compute clusters.

\subsection{The Downstream Tier: Agent Developers}

The downstream sector consists of a continuum of application developers, indexed by $j \in [0, M_t]$, where the measure of active firms $M_t$ is endogenously determined by a free-entry condition. Each downstream firm $j$ produces a differentiated service variety $Y_j(t)$ by combining upstream intelligence with firm-specific inputs.

Crucially, firm $j$ possesses a specialized stock of intangible capital $O_j(t)$, which we term \textit{Orchestration Capital}. This asset encapsulates the firm's proprietary middleware, prompt engineering libraries, and domain-specific context that allow it to effectively wield the general-purpose foundation model. These firms operate under monopolistic competition; each faces a downward-sloping demand curve for its specific variety $Y_j$, allowing it to set prices above marginal cost to cover fixed development costs.

\subsection{The Technological Frontier}

The production of digital intelligence is constrained by the global state of technology. We explicitly distinguish between the exogenous global frontier and firm-specific productivity.

\begin{assumption}[Technological Frontier]
    Let $\bar{A}(t)$ denote the \textit{Global Technological Frontier}, representing the theoretical maximum efficiency of current algorithmic architectures (e.g., Transformer efficiency) and hardware capabilities. We assume $\bar{A}(t)$ evolves exogenously according to:
    \begin{equation}
        \frac{\dot{\bar{A}}(t)}{\bar{A}(t)} = g_A, \quad \bar{A}(0) = A_0
    \end{equation}
    where $g_A > 0$ is the constant rate of exogenous technical progress.
\end{assumption}

While the frontier $\bar{A}(t)$ is public information, the effective Total Factor Productivity (TFP) of a specific upstream firm $i$, denoted by $A_i(t)$, depends on its private accumulation of basic research $R_i$. Specifically, we posit $A_i(t) = \bar{A}(t) \cdot \xi(R_i)$, where $\xi(\cdot)$ is an increasing, concave function reflecting the firm's absorptive capacity.

\subsection{Information Structure and Timing}

We assume perfect information regarding the current state vectors $\mathbf{K}(t)$ and $\mathbf{O}(t)$, as well as rational expectations regarding the evolution of the technological frontier $\bar{A}(t)$ and the strategic actions of competitors.

Although the model is set in continuous time, the logical sequence of events at each instant $t$ proceeds as follows: first, upstream and downstream firms make simultaneous innovation decisions to accumulate $K_{AI,i}$ and $O_j$; second, upstream firms set API prices $p_{API,i}$ subject to capacity constraints; third, downstream firms observe prices and qualities to determine their optimal architecture and clear the market; finally, capital stocks depreciate according to the endogenous mechanisms defined in Sections 4 and 8.

\section{Digital Intelligence Capital}
\label{sec:digital_intelligence_capital}

Central to our framework is the conceptualization of foundation models not merely as software or technology, but as a distinct asset class: \textbf{Digital Intelligence Capital}. In this section, we formally define this asset and characterize its unique production technology and depreciation properties, which distinguish it from traditional physical capital ($K$) and human capital ($L$).

\subsection{Definition and Economic Attributes}

\begin{definition}[Digital Intelligence Capital]
    Let $K_{AI,i}(t) \in \mathbb{R}_+$ denote the stock of general-purpose cognitive capability possessed by firm $i$ at time $t$. It acts as an intermediate input in the production of downstream services, characterized by three distinct economic attributes.
    
    First, the asset is \textit{non-rival}. The stock $K_{AI,i}$ is not depleted by usage; once trained, the model can serve an infinite stream of inference requests, subject only to physical compute throughput constraints. Second, it exhibits \textit{partial excludability}. Unlike pure public knowledge, $K_{AI,i}$ is excludable via API access controls, yet it faces leakage risks (e.g., via distillation or weight theft) that differentiate it from strictly proprietary physical assets. Finally, its formation requires \textit{cumulative synthesis}, representing the irreversible compression of information from vast datasets into synaptic weights through massive upfront investment.
\end{definition}

\subsection{Property 1: The Intelligence Production Function}

How is Digital Intelligence Capital created? We model the accumulation process using a micro-founded Augmented CES technology. Unlike standard industrial production, the generation of intelligence is constrained by the information-theoretic limits of learning from data.

The law of motion for capital accumulation is given by:
\begin{equation}
    \dot{K}_{AI,i}(t) = \mathcal{F}(C_i, D_{eff,i}) - \delta_i(t) K_{AI,i}(t)
\end{equation}

We specify the gross production function $\mathcal{F}(\cdot)$ as:
\begin{equation}
    \mathcal{F}(C_i, D_{eff,i}) = \underbrace{A_i(t)}_{\text{Firm TFP}} \cdot \underbrace{\left[ \alpha C_i(t)^\varrho + (1-\alpha) D_{eff,i}(t)^\varrho \right]^{\frac{\gamma}{\varrho}}}_{\text{Compute-Data Synthesis}}
\end{equation}
where $C_i$ represents compute flow (FLOPs) and $D_{eff,i}$ represents the effective data stock. The parameters $\varrho$ and $\gamma$ are calibrated to empirical scaling laws.

\paragraph{Microfoundation of Complementarity ($\varrho < 0$).}
The choice of the substitution parameter $\varrho$ is derived from the empirical structure of Neural Scaling Laws. \cite{hoffmann2022training} establish that the prediction loss $L$ scales as a function of compute and data, $L(C, D) = E + A C^{-\phi} + B D^{-\phi}$. This functional form implies that compute ($C$) and data ($D$) operate under strict \textit{structural complementarity}. 

Specifically, if one were to increase compute $C \to \infty$ while holding data $D$ fixed, the loss would converge to a non-zero floor determined by irreducible entropy (the "data bottleneck"). No amount of additional compute can extract information that does not exist in the dataset. Economically, this behavior is isomorphic to a CES production function with a low elasticity of substitution, $\sigma = \frac{1}{1-\varrho} < 1$. Accordingly, we set $\varrho = -0.2$ ($\sigma \approx 0.83$), treating inputs as gross complements.

\paragraph{Microfoundation of Increasing Returns ($\gamma > 1$).}
The parameter $\gamma$ characterizes returns to scale. A seeming contradiction arises here: engineering metrics (such as Cross-Entropy Loss) typically show diminishing returns to scale, yet the economic utility of these models exhibits increasing returns. We resolve this by distinguishing between \textit{technical performance} and \textit{economic utility}.

While technical scaling is concave—reducing the next-token prediction loss becomes exponentially harder as the model scales—the mapping to economic utility is convex. As formalized by the literature on "emergent capabilities" \citep{wei2022emergent}, task performance often follows a sigmoid or exponential relationship with loss reduction. The aggregate production function captures this composite effect. We calibrate $\gamma = 1.3$, reflecting that in the current technological regime, the marginal economic value of intelligence accelerates as models cross critical capability thresholds (e.g., gaining the ability to reason or code).

\subsection{Property 2: Relative Value and Endogenous Obsolescence}

The third and most critical property of Digital Intelligence Capital is that its economic value is determined \textit{relatively}, rather than absolutely.

\begin{assumption}[Relative Value Hypothesis]
    The shadow price $q_i(t)$ of the capital stock $K_{AI,i}$ depends negatively on the aggregate state of the industry $\mathbf{K}_{-i}$:
    \begin{equation}
        \frac{\partial q_i}{\partial K_{AI,j}} < 0, \quad \forall j \neq i
    \end{equation}
\end{assumption}

Intuitively, in a standard Solow growth model, a machine's productive capacity is independent of a competitor's investment. In the AI market, however, the value of model $i$ is defined by its \textit{comparative advantage} over model $j$. If the global technological frontier moves forward ($\bar{A} \uparrow$) or a competitor innovates ($K_{AI,j} \uparrow$), the effective demand for model $i$ collapses. This renders $K_{AI,i}$ economically obsolete even if its physical weights remain unchanged. This property serves as the basis for the derivation of the endogenous economic depreciation function $\delta_i(t)$ presented in Section 4.

\section{Upstream Market Equilibrium: The Static Core}
\label{sec:upstream_market}

In this section, we construct the static equilibrium of the foundation model market. We depart from standard Bertrand competition models by introducing two critical features specific to the AI industry: vertical differentiation based on digital intelligence capital, modeled via a Logit demand system, and strict physical capacity constraints on inference compute, which sustain markups even under intense price competition.

\subsection{Downstream Demand for Foundation Models}

Consider the decision problem of the continuum of downstream agents $j \in [0, M_t]$. At each instant $t$, agent $j$ selects a single foundation model supplier $i \in \mathcal{N}$ to maximize its operational utility. We specify the indirect utility function $V_{ji}$ as:
\begin{equation}
    V_{ji}(t) = \theta \ln K_{AI,i}(t) - p_{API,i}(t) + \varepsilon_{ji}(t)
\end{equation}
where $p_{API,i}$ is the price per token and $\varepsilon_{ji}$ represents an idiosyncratic preference shock, assumed to follow an i.i.d. Type I Extreme Value distribution. The parameter $\theta > 0$ governs the elasticity of substitution with respect to model intelligence; a higher $\theta$ implies that downstream agents are highly sensitive to capability differences, favoring "smarter" models.

Aggregating over the continuum of agents, the market share $s_i(t)$ of firm $i$ takes the standard multinomial Logit form:
\begin{equation}
    \label{eq:logit_share}
    s_i(\mathbf{p}, \mathbf{K}) = \frac{K_{AI,i}(t)^\theta \exp(-p_{API,i}(t))}{\sum_{k=1}^N K_{AI,k}(t)^\theta \exp(-p_{API,k}(t))}
\end{equation}
Let $Q_{total}(t)$ denote the aggregate token demand of the economy. The specific demand faced by firm $i$ is thus $Q_i(t) = s_i(\mathbf{p}, \mathbf{K}) \cdot Q_{total}(t)$. This demand structure captures the notion of \textit{vertical differentiation}: firms with higher Digital Intelligence Capital $K_{AI,i}$ command larger market shares, ceteris paribus, but do not capture the entire market due to heterogeneous user preferences.

\subsection{Capacity Constraints and the Cost Structure}

A central economic feature of the AI industry is the rigidity of physical compute clusters. Unlike traditional software where marginal costs are near zero and capacity is elastic, AI inference is bounded by the physical throughput of GPUs ($\bar{Q}$). As demand approaches this limit, latency increases and system stability degrades. To capture this, we model the operational cost $C_{ops}(Q_i)$ using a convex barrier function:
\begin{equation}
    C_{ops}(Q_i) = c_0 Q_i - \kappa \bar{Q} \ln\left(1 - \frac{Q_i}{\bar{Q}}\right)
\end{equation}
where $c_0$ represents the constant baseline marginal cost (e.g., electricity), and $\kappa > 0$ scales the congestion penalty.

The convexity of this function plays a crucial role in market equilibration. As $Q_i \to \bar{Q}$, the marginal cost of serving an additional token approaches infinity. This structural constraint prevents the "Bertrand Paradox"—where prices collapse to marginal cost—by forcing firms to internalize the congestion externalities of aggressive pricing.

\subsection{Optimal Pricing and the Congestion Markup}

Each upstream firm $i$ chooses its price $p_{API,i}$ to maximize instantaneous profit $\pi_i = p_{API,i} Q_i - C_{ops}(Q_i)$, taking competitors' prices as given. The first-order condition for profit maximization equates marginal revenue to marginal cost. Rearranging terms reveals the optimal pricing rule:
\begin{equation}
    \label{eq:pricing_rule}
    p_{API,i}^* = \underbrace{c_0}_{\text{Base Cost}} + \underbrace{\frac{\kappa}{1 - Q_i^*/\bar{Q}}}_{\text{Congestion Premium}} + \underbrace{\frac{1}{1 - s_i^*}}_{\text{Oligopoly Markup}}
\end{equation}

This pricing equation decomposes the equilibrium price into three distinct economic components. The first is the physical cost of electricity. The second is the \textit{Congestion Premium}, which acts as a shadow price for capacity; as demand $Q_i^*$ approaches the physical limit $\bar{Q}$, this term explodes, effectively rationing demand through price. The third term is the \textit{Oligopoly Markup}, derived from the own-price elasticity of demand in the Logit model ($\eta_{ii} = -(1-s_i)$). It reflects the firm's market power: firms with higher market share $s_i$ face more inelastic demand and thus charge a higher markup.

\subsection{Static Equilibrium Definition}

We formalized the interaction described above as a static non-cooperative game.

\begin{definition}[Static API Market Equilibrium]
    Given the vector of Digital Intelligence Capital states $\mathbf{K}(t)$, a Static Equilibrium is defined as a pair of price and quantity vectors $(\mathbf{p}^*, \mathbf{Q}^*)$ such that the following conditions hold simultaneously:
    \begin{enumerate}
        \item \textbf{Profit Maximization:} For every firm $i$, the price $p_{API,i}^*$ maximizes $\pi_i$ given competitors' prices $\mathbf{p}_{-i}^*$.
        \item \textbf{Market Clearing:} The quantity demanded matches the quantity supplied: $Q_i^* = s_i(\mathbf{p}^*, \mathbf{K}) Q_{total}$.
        \item \textbf{Feasibility:} The equilibrium quantities satisfy the physical capacity constraints: $Q_i^* < \bar{Q}$ for all $i \in \mathcal{N}$.
    \end{enumerate}
\end{definition}

For analytical tractability in the subsequent dynamic analysis, we will frequently refer to the \textit{Symmetric Equilibrium}, where all firms possess identical capital stocks $K_{AI}$, charge equal prices $p^*$, and serve equal market shares $s^* = 1/N$.

\section{Dynamics I: Endogenous Depreciation and the Red Queen Effect}
\label{sec:red_queen}

\begin{figure}[htbp]
    \centering
    \includegraphics[width=\textwidth]{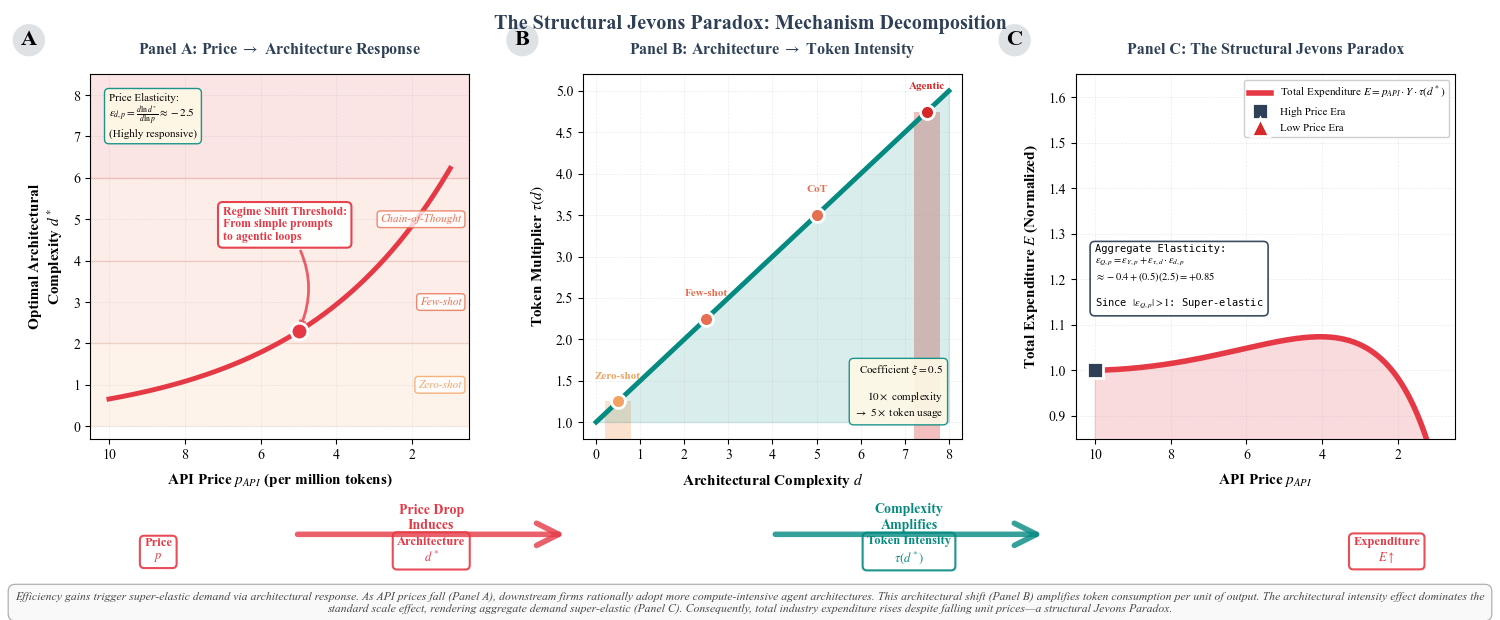}
    \caption{
    \textbf{The structural Jevons paradox: mechanism decomposition.}
    Panel~A shows the downstream architectural response to declining API prices: as inference prices fall, firms optimally increase architectural complexity (e.g., reasoning depth and agentic loops), with a regime shift from simple prompting to agentic workflows.
    Panel~B maps architectural complexity to token intensity, illustrating how more complex architectures amplify token usage through a token multiplier.
    Panel~C combines these effects to show the resulting aggregate outcome: despite declining unit prices, total expenditure on AI services increases because the architecture-induced increase in token intensity dominates the standard scale effect.
    Together, the three panels demonstrate that the Jevons paradox in the AI industry is structural, arising from endogenous architectural choice rather than exogenous demand elasticity.
    }
    \label{fig:structural_jevons}
\end{figure}

This section bridges the gap between static profit maximization and dynamic capital accumulation. We derive the core theoretical contribution of the paper: Endogenous Economic Depreciation. We demonstrate that in the AI industry, asset depreciation is not primarily a physical property of hardware or software decay, but a strategic outcome driven by the relative advancement of the technological frontier and the innovation of competitors.

\subsection{The Shadow Value of Digital Intelligence Capital}

We formulate the upstream firm's dynamic optimization problem using the Hamilton-Jacobi-Bellman (HJB) framework. Firm $i$ chooses its investment flow $I_{AI,i}(t)$ to maximize the present value of future profit streams, subject to the capital accumulation constraint. The value function $V_i(\mathbf{K}, t)$ satisfies the continuous-time Bellman equation:
\begin{equation}
    r V_i(\mathbf{K}, t) = \max_{I_{AI,i}} \left\{ \pi_i^*(\mathbf{K}) - C_{RD}(I_{AI,i}) + \sum_{j=1}^N \frac{\partial V_i}{\partial K_{AI,j}} \dot{K}_{AI,j} + \frac{\partial V_i}{\partial t} \right\}
\end{equation}
where $\pi_i^*(\mathbf{K})$ represents the reduced-form profit function derived from the static equilibrium in Section 3, and $C_{RD}(\cdot)$ is the convex cost of R\&D investment.

Central to our analysis is the \textit{shadow price} of digital intelligence capital, denoted by $q_i$. It represents the marginal increase in the firm's value resulting from an infinitesimal increase in its intelligence stock:
\begin{equation}
    q_i(\mathbf{K}, t) \equiv \frac{\partial V_i}{\partial K_{AI,i}}
\end{equation}
By applying the Envelope Theorem to the static profit function, we observe that the shadow price is proportional to the marginal revenue product of capital. Given the Logit demand structure, this yields $q_i \propto \theta \cdot s_i (1-s_i) / K_{AI,i}$.

\subsection{Mechanism: Pecuniary Externalities}

The driver of endogenous depreciation is the sensitivity of firm $i$'s shadow price to the innovation of its rivals. In standard growth models, capital accumulation by one firm does not directly devalue the capital stock of another. In the AI market, however, vertical differentiation creates strong negative externalities.

\begin{proposition}[Negative Pecuniary Externality]
    \label{prop:externality}
    An increase in the digital intelligence capital of a competitor $j$ ($j \neq i$) strictly reduces the shadow value of firm $i$'s capital stock:
    \begin{equation}
        \frac{\partial q_i}{\partial K_{AI,j}} < 0
    \end{equation}
\end{proposition}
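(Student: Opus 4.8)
The plan is to verify Proposition~\ref{prop:externality} directly from the closed-form shadow price obtained above, $q_i \propto \theta\, s_i(1-s_i)/K_{AI,i}$, by writing it as $q_i = \Lambda\,\theta\, s_i(1-s_i)/K_{AI,i}$, where the factor $\Lambda>0$ collects the aggregate demand scale $Q_{total}$, the discount rate, and the multiplier on the capital-accumulation constraint, and is — at the order that matters here — independent of the rival state $K_{AI,j}$. Holding $K_{AI,i}$ fixed and differentiating in $K_{AI,j}$ for $j \neq i$,
\[
\frac{\partial q_i}{\partial K_{AI,j}} \;=\; \frac{\Lambda\,\theta}{K_{AI,i}}\,(1-2s_i)\,\frac{\partial s_i}{\partial K_{AI,j}},
\]
so everything reduces to signing the two scalar factors $(1-2s_i)$ and $\partial s_i/\partial K_{AI,j}$.

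The factor $1-2s_i$ is positive under the maintained non-dominance condition that no single provider holds more than half the market; in particular it is automatic in the symmetric equilibrium, where $s_i = 1/N \le 1/2$ for $N \ge 2$. The factor $\partial s_i/\partial K_{AI,j}$ is negative: the partial, fixed-price effect is immediate from the Logit share~\eqref{eq:logit_share}, $\partial s_i/\partial K_{AI,j}\big|_{\mathbf p} = -\theta\, s_i s_j / K_{AI,j} < 0$, since a more capable rival absorbs probability mass. The substantive step is to show this sign survives the endogenous repricing $\mathbf p^{*}(\mathbf K)$. I would do this by implicitly differentiating the system of pricing first-order conditions~\eqref{eq:pricing_rule}: because the congestion barrier makes $C_{ops}$ strictly convex, standard arguments for Logit pricing games deliver a unique equilibrium whose FOC Jacobian in $\mathbf p$ is strictly diagonally dominant, so $\mathbf p^{*}(\mathbf K)$ is a well-defined smooth map and the total derivative $ds_i/dK_{AI,j} = \partial s_i/\partial K_{AI,j}\big|_{\mathbf p} + \sum_k (\partial s_i/\partial p_k)\,(dp_k^{*}/dK_{AI,j})$ can be assembled and signed. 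For a fully explicit argument it suffices to carry this computation out in a neighborhood of the symmetric equilibrium, where the Jacobian and its inverse are available in closed form and the resulting expression is negative by inspection. Combining the two factors gives $\partial q_i/\partial K_{AI,j} < 0$, which also recovers the Relative Value Hypothesis of Section~\ref{sec:digital_intelligence_capital} as the special case evaluated at the symmetric point.

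The genuine obstacle is precisely this general-equilibrium price feedback: a priori, the price responses of the other providers could be large enough to reverse the direct Logit effect, and ruling that out is the one place where the convexity of the congestion cost does real analytic work rather than merely defusing the Bertrand paradox. A secondary point I would flag in a remark is that the $(1-2s_i)$ term makes the externality \emph{non-monotone} in concentration — for an entrenched near-monopolist with $s_i > 1/2$ its sign flips — so the Red Queen force is sharpest exactly in the contested, near-symmetric regime the paper studies.
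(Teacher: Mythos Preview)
Your argument is correct and follows essentially the same line as the paper's proof sketch: both start from $q_i \propto \theta\, s_i(1-s_i)/K_{AI,i}$, sign $\partial s_i/\partial K_{AI,j} < 0$ from the Logit denominator, and invoke $s_i < 1/2$ so that the factor $(1-2s_i)$ is positive. You are in fact more careful than the paper, which holds prices fixed and does not address the equilibrium repricing $\mathbf{p}^*(\mathbf{K})$ that you handle via implicit differentiation of the pricing FOCs, and which buries the $s_i < 0.5$ restriction in a parenthetical rather than flagging the sign reversal you correctly note.
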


\begin{proof}[Proof Sketch]
    The market share $s_i$ is determined by the ratio of firm $i$'s capability to the aggregate industry capability. An increase in $K_{AI,j}$ raises the denominator of the Logit function, strictly decreasing $s_i$. Since the marginal revenue product of capital is increasing in market share (for $s_i < 0.5$), the reduction in $s_i$ lowers the marginal contribution of $K_{AI,i}$ to total profit. Consequently, the shadow price $q_i$ falls. This cross-derivative mathematically captures the notion that "my model becomes worthless simply because your model became smarter."
\end{proof}

\subsection{Derivation of the Linearized Depreciation Function}

We define the economic depreciation rate, $\delta_{econ,i}(t)$, as the negative rate of change of the shadow price relative to itself: $\delta_{econ,i} \equiv - \dot{q}_i / q_i$. Expanding the total time derivative of $q_i$ reveals that depreciation is composed of self-cannibalization effects and competitive erosion.

To obtain a tractable functional form for the macroscopic model, we perform a Taylor expansion of $\delta_{econ,i}$ around the Symmetric Balanced Growth Path (BGP), where all firms grow at the rate of the exogenous frontier $g_A$. Let $Gap_i(t) \approx \ln(K_{-i}^{max}/K_{AI,i})$ denote the relative capability gap between firm $i$ and the market leader.

\begin{equation}
    \delta_i(t) = \delta_0 + \delta_1 g_A + \delta_2 \cdot \max\{0, Gap_i(t)\}
\end{equation}
This linearized depreciation function, which we employ throughout the remainder of the paper, consists of three structural components:
\begin{itemize}
    \item \textbf{Baseline Physical Depreciation ($\delta_0$):} This term captures the standard wear and tear of physical servers and the maintenance cost of the codebase.
    \item \textbf{Frontier Pressure ($\delta_1 g_A$):} This term represents the "treadmill effect." We derive analytically that $\delta_1 = 1 + \theta (1 - \frac{1}{N})$. It implies that higher intelligence elasticity ($\theta$) or a larger number of competitors ($N$) accelerates the obsolescence of existing models solely due to the movement of the global frontier $\bar{A}(t)$.
    \item \textbf{Competitive Gap Penalty ($\delta_2 \cdot Gap_i$):} This term captures the "Winner-Takes-All" dynamic. The parameter $\delta_2$, derived from the Hessian of the value function, dictates how strictly the market punishes laggards. A positive gap triggers accelerated depreciation, potentially driving the value of inferior models to zero.
\end{itemize}

\subsection{Theorem 1: The Red Queen Effect}

\begin{figure}[htbp]
    \centering
    \includegraphics[width=\textwidth]{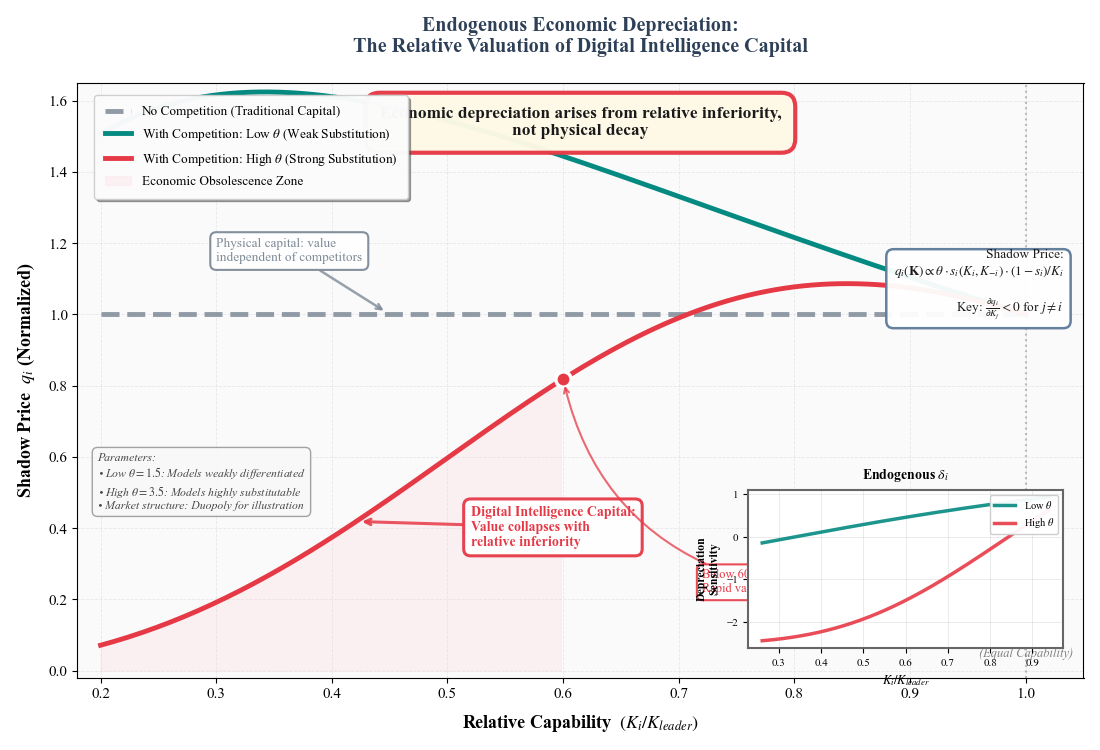}
    \caption{
    \textbf{Endogenous economic depreciation and the relative valuation of digital intelligence capital.}
    The figure illustrates how the shadow value of digital intelligence capital $q_i$ depends on a firm’s capability relative to the market leader.
    In the absence of competition, the value of capital remains constant, as in traditional physical capital.
    With competition, however, valuation becomes inherently relative: when models are weakly substitutable (low $\theta$), the shadow price declines gradually with relative inferiority, whereas under strong substitutability (high $\theta$), small capability gaps lead to sharp value collapse.
    The shaded region highlights an economic obsolescence zone in which capital loses value despite unchanged physical capacity.
    The inset panel shows the implied endogenous depreciation rate $\delta_i$, which increases as relative capability deteriorates, capturing the Red Queen effect driven by competitive pressure rather than physical decay.
    }
    \label{fig:endogenous_depreciation}
\end{figure}

The existence of the frontier pressure term $\delta_1 g_A$ leads to a profound implication for firm strategy, which we term the Red Queen Effect.

\begin{theorem}[The Red Queen Effect]
    \label{thm:red_queen}
    Consider a market leader with no competitive gap ($Gap_i = 0$). If this firm ceases innovation ($I_{AI,i} = 0$), the economic value of its digital intelligence capital does not remain constant but decays at the rate $\delta_0 + \delta_1 g_A$. To maintain a constant relative asset value, the firm must sustain a minimum investment intensity bounded by:
    \begin{equation}
        \frac{I_{AI,i}}{Rev_i} \geq \frac{\delta_0 + \delta_1 g_A}{\mathcal{M}_{R\&D}}
    \end{equation}
    where $\mathcal{M}_{R\&D}$ is the marginal productivity of R\&D spending.
\end{theorem}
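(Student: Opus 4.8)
The plan is to obtain the statement in two moves: first read the decay rate directly off the linearized depreciation law at $Gap_i=0$, then invert the capital accumulation equation to convert ``constant relative value'' into a floor on investment intensity.

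For the first claim, recall that $\delta_i(t)$ was built as the Taylor expansion of the economic depreciation rate $\delta_{econ,i}\equiv-\dot q_i/q_i$ about the symmetric BGP, and that it is the depreciation term in $\dot K_{AI,i}=\mathcal F(C_i,D_{eff,i})-\delta_i K_{AI,i}$. Imposing $Gap_i=0$ annihilates the competitive-gap component and leaves $\delta_i=\delta_0+\delta_1 g_A$ with $\delta_1=1+\theta(1-1/N)>0$. Ceasing innovation sets $I_{AI,i}=0$, hence $C_i=0$; because $\varrho<0$, the synthesis term $[\alpha C_i^\varrho+(1-\alpha)D_{eff,i}^\varrho]^{\gamma/\varrho}\to0$ as $C_i\to0$, so $\mathcal F=0$. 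The law of motion then forces $\dot K_{AI,i}/K_{AI,i}=-(\delta_0+\delta_1 g_A)$, and differentiating the envelope expression $q_i\propto\theta s_i(1-s_i)/K_{AI,i}$ along the same trajectory reproduces this rate. Hence the (economically effective) stock, and with it the firm's relative asset value, contracts at the strictly positive rate $\delta_0+\delta_1 g_A>\delta_0$ rather than staying put---and since the excess over baseline is proportional to $g_A$, even a gapless leader is taxed purely by the advancing frontier. That is the Red Queen mechanism.

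For the second claim, ``maintaining a constant relative asset value'' is the requirement $\dot K_{AI,i}\ge0$, i.e. $\mathcal F(C_i,D_{eff,i})\ge\delta_i K_{AI,i}=(\delta_0+\delta_1 g_A)K_{AI,i}$, still at $Gap_i=0$; note that the frontier-chasing obligation is already internalized in the $\delta_1 g_A$ term, so no extra ``$+g_A$'' growth term is needed. Dividing by $K_{AI,i}$ and writing the gross capability growth rate $\mathcal F/K_{AI,i}$ as a function of the investment intensity $I_{AI,i}/Rev_i$---through the R\&D cost function $C_{RD}(\cdot)$ and the compute channel of $\mathcal F$---let $\mathcal{M}_{R\&D}$ denote the marginal productivity of R\&D spending, the rate at which an extra unit of revenue routed to R\&D raises $\mathcal F/K_{AI,i}$. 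Reading $\mathcal{M}_{R\&D}$ as the secant slope through the operating point (equivalently, on the linear range of the investment-to-growth map), $\mathcal F/K_{AI,i}=\mathcal{M}_{R\&D}\cdot(I_{AI,i}/Rev_i)$, and chaining with the accumulation constraint yields
\begin{equation}
  \delta_0+\delta_1 g_A\;\le\;\frac{\mathcal F}{K_{AI,i}}\;=\;\mathcal{M}_{R\&D}\cdot\frac{I_{AI,i}}{Rev_i},
\end{equation}
which rearranges to the stated bound.

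The step I expect to be the real obstacle is pinning down $\mathcal{M}_{R\&D}$ so that this last relation points the right way. Concavity of $\mathcal F$ in $C_i$ (with monotone $\xi$ and convex $C_{RD}$) delivers a clean linear bound $\mathcal F/K_{AI,i}\le\mathcal{M}_{R\&D}\cdot(I_{AI,i}/Rev_i)$---with equality in the secant case---only when $\mathcal{M}_{R\&D}$ is the slope at the origin or the secant slope; if instead it is the marginal product \emph{at} the operating point, the average-exceeds-marginal relation reverses the inequality and the bound survives only up to the average--marginal wedge in R\&D returns. I would therefore fix the definition of $\mathcal{M}_{R\&D}$ explicitly as the secant (average) productivity, or restrict to the parametric range where $\mathcal F/K_{AI,i}$ is linear in investment intensity, which makes the second step exact rather than an order-of-magnitude claim. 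A minor interpretive remark worth making once: $K_{AI,i}$ here denotes economically effective capability, not literal parameter count, so its contraction when $\mathcal F=0$ is frontier-driven obsolescence---precisely what justifies using the economic rate $\delta_i=\delta_{econ,i}$ inside the capital law of motion in the first place.
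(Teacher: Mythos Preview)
Your proof is correct and reaches the same bound, but by a different route than the paper's. The paper's own argument works at the level of relative capability and the shadow price: it holds the physical stock $K_{AI,i}$ essentially constant when $I_{AI,i}=0$ (treating $\delta_0$ as second-order), locates the decay in the ratio $\hat{k}_i=K_{AI,i}/\bar{A}(t)$ falling at rate $g_A$ because the frontier advances while the firm stands still, observes that $s_i$ and $q_i$ depend on $\hat{k}_i$ so the shadow price collapses, and then imposes $\dot{q}_i=0$ to force $K_{AI,i}$ to track the frontier at rate $g_A$, producing the investment floor. You instead read the rate directly off the linearized depreciation law at $Gap_i=0$ and work inside the accumulation equation, using complementarity ($\varrho<0$) to kill $\mathcal{F}$ when $C_i\to 0$, so that $\dot{K}_{AI,i}/K_{AI,i}=-(\delta_0+\delta_1 g_A)$; then you invert $\mathcal{F}/K_{AI,i}\ge\delta_0+\delta_1 g_A$ via $\mathcal{M}_{R\&D}$. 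Your approach is mechanically cleaner and makes the bound derivation explicit---the paper is terse about how the inequality actually appears---but it leans on identifying $I_{AI,i}=0$ with $C_i=0$ and on reading $K_{AI,i}$ as economically effective capital with the frontier already folded into $\delta_1 g_A$ (which you acknowledge). Your discussion of whether $\mathcal{M}_{R\&D}$ should be the secant or the marginal productivity, and why that matters for the direction of the inequality, goes beyond the paper and correctly flags an ambiguity the paper leaves unresolved.
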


\begin{proof}
    Suppose $I_{AI,i} = 0$. While the physical stock $K_{AI,i}$ remains constant (ignoring $\delta_0$), the global frontier $\bar{A}(t)$ continues to advance at rate $g_A$. Consequently, the relative capability of the firm, defined as $\hat{k}_i = K_{AI,i} / \bar{A}(t)$, decays at rate $g_A$. Since the equilibrium price and market share depend on relative capability $\hat{k}_i$, the shadow price $q_i$ collapses. The condition for $\dot{q}_i = 0$ requires the firm to grow $K_{AI,i}$ at the same rate as the frontier $g_A$, necessitating the strictly positive investment flow defined above.
\end{proof}

This theorem formalizes the "Innovation Tax" observed in the AI industry. Unlike traditional monopolies that can exploit a static moat, AI leaders are forced into aggressive capital expenditure cycles merely to preserve their current market standing. As the Red Queen in Lewis Carroll's \textit{Through the Looking-Glass} remarks: "Now, here, you see, it takes all the running you can do, to keep in the same place."

\section{Downstream Agents and Architecture Choice}
\label{sec:downstream_optimization}

In this section, we endogenize the technological adoption decision of downstream firms. Unlike previous literature that models AI adoption merely as a binary choice or a simple capital substitution, we posit that agent developers face a complex engineering trade-off: they must choose the optimal \textit{architectural complexity} of their AI agents. This choice balances the marginal productivity of advanced reasoning capabilities against the marginal cost of increased token consumption.

\subsection{The Agent Production Function}

Firm $j$ produces a specialized service $Y_j$. We model the production technology using a Cobb-Douglas form with constant returns to scale, augmented by an architectural efficiency term. The output is generated by combining labor $L_j$, the firm's specific orchestration capital $O_j$, and the effective intelligence derived from the upstream foundation model.

Specifically, the production function is given by:
\begin{equation}
    Y_j = O_j \cdot \left[ K_{AI}^{base} \cdot h(\mathbf{d}_j) \right]^{\psi_K} \cdot L_j^{1-\psi_K}
\end{equation}
where $\psi_K \in (0, 1)$ represents the output elasticity of AI capital (calibrated to $\psi_K = 0.4$ in our numerical analysis), and $K_{AI}^{base}$ is the raw capability of the chosen upstream model.

The term $h(\mathbf{d}_j)$ represents the \textit{Architectural Efficiency Function}. It maps the design complexity vector $\mathbf{d}_j$—comprising reasoning depth (e.g., Chain-of-Thought steps), tool integration, and memory context—into a scalar productivity multiplier. We assume $h(\cdot)$ exhibits diminishing marginal returns with respect to complexity. For analytical tractability, we specify a scalar proxy for complexity $d_j \ge 0$:
\begin{equation}
    h(d_j) = 1 + \phi \cdot d_j^\nu, \quad \nu \in (0, 1)
\end{equation}
where $\phi > 0$ captures the efficacy of architectural engineering. This specification implies that "raw" intelligence ($d_j=0$) yields a baseline efficiency of 1, while engineering efforts amplify the model's effective capability.

\subsection{The Token Multiplier and Cost Structure}

A critical feature of AI engineering is that complexity is not free. Advanced architectures such as recursive reasoning loops or massive context retrieval significantly increase the number of tokens required to produce a single unit of final service. We model this cost via the \textit{Token Multiplier} function, $\tau(d_j)$.

The total API token demand $Q_{API,j}$ for firm $j$ is the product of its service output and the token intensity of its architecture:
\begin{equation}
    Q_{API,j} = Y_j \cdot \tau(d_j)
\end{equation}

We assume the token cost rises linearly with complexity:
\begin{equation}
    \tau(d_j) = 1 + \xi \cdot d_j
\end{equation}
where $\xi > 0$ is the token consumption coefficient. This linear specification captures the reality that techniques like "Few-Shot Prompting" or "ReAct" linearly expand the context length per inference call.

\subsection{Optimization and First-Order Conditions}

Firm $j$ acts as a price taker in the factor markets. It chooses the labor input $L_j$ and the architectural complexity $d_j$ to maximize profits. The profit maximization problem is:
\begin{equation}
    \max_{L_j, d_j} \Pi_j = P_Y Y_j - w L_j - p_{API} Q_{API,j}
\end{equation}
Substituting the constraints, the objective function becomes:
\begin{equation}
    \max_{L_j, d_j} \left\{ P_Y Y_j(L_j, d_j) - w L_j - p_{API} \tau(d_j) Y_j(L_j, d_j) \right\}
\end{equation}
Note that the API cost is effectively a variable tax on production, where the tax rate $p_{API} \tau(d_j)$ is endogenously determined by the firm's engineering choices.

The first-order condition with respect to labor $L_j$ yields the standard equality between the marginal revenue product and the wage. More importantly, the first-order condition with respect to architectural complexity $d_j$ characterizes the optimal design choice:
\begin{equation}
    \underbrace{(P_Y - p_{API} \tau) \frac{\partial Y_j}{\partial h} h'(d_j)}_{\text{Marginal Benefit of Intelligence}} = \underbrace{p_{API} \tau'(d_j) Y_j}_{\text{Marginal Cost of Tokens}}
\end{equation}

Rearranging terms and using the specific functional forms, we derive the implicit solution for the optimal complexity $d^*$:
\begin{equation}
    \label{eq:optimal_architecture}
    \frac{h'(d^*)}{h(d^*)} \cdot \frac{\psi_K}{\tau'(d^*)/\tau(d^*) + \Lambda} = \frac{p_{API} \tau(d^*)}{P_Y}
\end{equation}
where $\Lambda$ is a composite term of output elasticities.

This equilibrium condition reveals the central trade-off: firms increase architectural complexity until the elasticity of capability gain ($\frac{h'}{h}$) equals the shadow cost of token inflation. Crucially, Equation (\ref{eq:optimal_architecture}) shows that the optimal architecture $d^*$ is a function of the API price $p_{API}$. As we will demonstrate in the subsequent section, this dependency is the structural driver of the Jevons Paradox.

\section{Dynamics II: The Jevons Paradox as a Structural Result}
\label{sec:jevons_paradox}

In classical resource economics, the Jevons Paradox observes that technological improvements increasing the efficiency of a resource's use can leads to an increase, rather than a decrease, in the total consumption of that resource. In the context of the AI industry, we observe a digital analogue: as the unit price of intelligence (API cost per token) falls, the aggregate demand for compute does not merely expand along a fixed demand curve. Instead, the demand function itself shifts structurally due to the endogenous adoption of increasingly complex agent architectures.

\subsection{The Aggregate Elasticity of Compute Demand}

To formalize this mechanism, we analyze the response of total industry expenditure on compute, $E_{total}$, to changes in the API price $p_{API}$. Let the total token demand for a representative firm be $Q_{API}(p_{API}) = Y(p_{API}) \cdot \tau(d^*(p_{API}))$. The total expenditure is given by:
\begin{equation}
    E_{total}(p_{API}) = p_{API} \cdot Q_{API}(p_{API})
\end{equation}

The Jevons Paradox occurs if a decrease in price leads to an increase in total expenditure, which corresponds to the condition $\frac{\partial E_{total}}{\partial p_{API}} < 0$. Differentiating with respect to price, this condition holds if and only if the aggregate price elasticity of demand, $\varepsilon_{Q,p}$, is highly elastic (i.e., less than $-1$):
\begin{equation}
    \frac{\partial E_{total}}{\partial p_{API}} = Q_{API} \left( 1 + \varepsilon_{Q,p} \right) < 0 \iff \varepsilon_{Q,p} < -1
\end{equation}

We decompose this aggregate elasticity into two distinct structural components: the scale effect and the architectural intensity effect.
\begin{equation}
    \varepsilon_{Q,p} \equiv \frac{d \ln Q_{API}}{d \ln p_{API}} = \underbrace{\frac{\partial \ln Y}{\partial \ln p_{API}}}_{\text{Scale Effect } (\varepsilon_{Y,p})} + \underbrace{\frac{\partial \ln \tau}{\partial \ln d^*} \frac{\partial \ln d^*}{\partial \ln p_{API}}}_{\text{Architectural Intensity Effect}}
\end{equation}

The scale effect $\varepsilon_{Y,p}$ captures standard downstream market expansion and is typically negative but moderate. The critical driver in the AI economy is the second term, which we characterize in the following proposition.

\subsection{Structural Conditions for the Paradox}

\begin{proposition}[Jevons Paradox in AI]
    \label{prop:jevons}
    Assume the architectural production function $h(d)$ is strictly concave and the token cost function $\tau(d)$ is linear. If the elasticity of substitution between "raw intelligence" and "architectural complexity" is sufficiently high, such that the architectural response elasticity satisfies:
    \begin{equation}
        \left| \varepsilon_{d,p} \right| > \frac{1 + \varepsilon_{Y,p}}{\varepsilon_{\tau,d}}
    \end{equation}
    then $\frac{\partial (p_{API} Q_{API})}{\partial p_{API}} < 0$. Consequently, a reduction in upstream API prices induces a disproportionate expansion in downstream compute demand, increasing total industry revenue.
\end{proposition}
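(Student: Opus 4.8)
The target inequality is $\partial E_{total}/\partial p_{API} < 0$, and since the text already establishes the identity $\partial E_{total}/\partial p_{API} = Q_{API}\,(1+\varepsilon_{Q,p})$ with $Q_{API}>0$, the whole proof reduces to showing $\varepsilon_{Q,p} < -1$. The plan is: (i) fix the signs of the three elasticities entering the decomposition $\varepsilon_{Q,p} = \varepsilon_{Y,p} + \varepsilon_{\tau,d}\,\varepsilon_{d,p}$ derived above; (ii) substitute the hypothesis; (iii) read off the conclusion. For (i), linearity of $\tau$ gives $\tau(d)=1+\xi d$, hence $\varepsilon_{\tau,d} = \xi d^*/(1+\xi d^*) \in (0,1)$ for any interior $d^*>0$; and I would record as a maintained assumption (the ``moderate scale effect'' discussed in the text) that the pure downstream-demand channel is inelastic, $-1 < \varepsilon_{Y,p} < 0$, so that $1+\varepsilon_{Y,p} > 0$ — this is exactly what makes the stated bound on $|\varepsilon_{d,p}|$ non-vacuous, since if $1+\varepsilon_{Y,p}\le 0$ the scale effect alone already yields the paradox.

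The substantive step is signing $\varepsilon_{d,p}$, i.e. showing $\partial d^*/\partial p_{API} < 0$ (cheaper tokens call forth more complex architectures). I would obtain this by comparative statics on Eq.~(\ref{eq:optimal_architecture}). Concentrating out labour via its first-order condition reduces the downstream firm's objective to $\widetilde{\Pi}_j(d_j;p_{API}) \propto \bigl(P_Y - p_{API}\tau(d_j)\bigr)^{1/\psi_K} h(d_j)$ on the production region $P_Y > p_{API}\tau(d_j)$; taking logs, this is strictly concave in $d_j$ because $h$ is strictly concave and $\tau$ is linear (the $h'/h$ term is strictly decreasing and the token-cost term contributes a further negative second derivative), so the optimum is interior and pinned down by the first-order condition. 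Its cross-partial in $(d_j,p_{API})$ is strictly negative — a higher $p_{API}$ simultaneously inflates the marginal token-cost term $p_{API}\tau'(d)Y$ and deflates the marginal-benefit term $(P_Y-p_{API}\tau)(\partial Y/\partial h)h'(d)$ — so by the implicit function theorem (equivalently, Topkis's theorem) $d^*(\cdot)$ is a smooth, strictly decreasing function of $p_{API}$. Hence $\varepsilon_{d,p}<0$ and $|\varepsilon_{d,p}| = -\varepsilon_{d,p}$.

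Assembling the pieces: because $\varepsilon_{\tau,d}>0$ and $\varepsilon_{d,p}<0$, the architectural-intensity term equals $-\varepsilon_{\tau,d}|\varepsilon_{d,p}|$, and the hypothesis $|\varepsilon_{d,p}| > (1+\varepsilon_{Y,p})/\varepsilon_{\tau,d}$ rearranges to $\varepsilon_{\tau,d}|\varepsilon_{d,p}| > 1+\varepsilon_{Y,p}$. Therefore
\begin{equation*}
    \varepsilon_{Q,p} = \varepsilon_{Y,p} - \varepsilon_{\tau,d}|\varepsilon_{d,p}| < \varepsilon_{Y,p} - (1+\varepsilon_{Y,p}) = -1 ,
\end{equation*}
so $1+\varepsilon_{Q,p}<0$ and $\partial E_{total}/\partial p_{API} = Q_{API}(1+\varepsilon_{Q,p}) < 0$. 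The closing clause is just the economic reading of $\varepsilon_{Q,p}<-1$: a marginal fall in $p_{API}$ raises token demand $Q_{API}$ more than proportionately, so $E_{total}=p_{API}Q_{API}$ — i.e. aggregate industry revenue — rises.

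I expect the $\varepsilon_{d,p}$ comparative statics to be the only real obstacle, because Eq.~(\ref{eq:optimal_architecture}) conceals the dependence of $Y$ on both $d$ and the concentrated-out labour choice, so a brute-force total differentiation is unpleasant; the clean route is to work with the reduced objective $\widetilde{\Pi}_j$ and use monotone comparative statics, which needs only the sign of the cross-partial, with strict concavity of $h$ invoked solely to guarantee an interior optimum and a valid first-order characterisation (so that $\varepsilon_{d,p}$ is well defined). The only further item to flag is the maintained regime $1+\varepsilon_{Y,p}>0$: outside it the conclusion is immediate and the bound on $|\varepsilon_{d,p}|$ is not required.
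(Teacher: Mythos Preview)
Your proposal is correct and follows essentially the same route as the paper: sign the three elasticities in the decomposition $\varepsilon_{Q,p} = \varepsilon_{Y,p} + \varepsilon_{\tau,d}\,\varepsilon_{d,p}$, establish $\varepsilon_{d,p}<0$ from the first-order condition~(\ref{eq:optimal_architecture}), and then insert the hypothesis to obtain $\varepsilon_{Q,p}<-1$. If anything your treatment is more careful than the paper's sketch---the paper signs $\varepsilon_{d,p}$ by asserting that the total differential of the log-FOC has the right sign, whereas you concentrate out labour and invoke monotone comparative statics on the reduced objective, and you also make explicit the maintained regime $1+\varepsilon_{Y,p}>0$ that the paper leaves implicit.
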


\begin{proof}
    Recall the first-order condition for architectural choice from Equation (\ref{eq:optimal_architecture}):
    \begin{equation}
        \frac{h'(d)}{h(d)} \propto p_{API} \tau(d)
    \end{equation}
    Taking the total differential of the logarithmic form implies that optimal complexity $d^*$ is negatively related to price ($d \ln d^* / d \ln p_{API} < 0$).
    
    The term $\varepsilon_{\tau,d} = \frac{d \tau}{d} \frac{d}{\tau} = \frac{\xi d}{1+\xi d}$ represents the elasticity of token consumption with respect to complexity. As $d$ increases, this elasticity approaches 1 from below.
    
    The term $\varepsilon_{d,p}$ measures how aggressively firms switch to complex architectures when prices drop. If architectural complexity is a viable substitute for raw model quality (e.g., using Chain-of-Thought on a cheaper model to match the performance of a clearer but expensive model), this elasticity is large in magnitude. When $|\varepsilon_{d,p}|$ is large enough to satisfy the inequality (56), the architectural intensity effect dominates the unitary price reduction, driving total expenditure up.
\end{proof}

\subsection{Interpretation: Regime Shifts in Cognitive Workflow}

The economic significance of Proposition \ref{prop:jevons} lies in its interpretation of technological regime shifts. In the AI domain, a decline in $p_{API}$ does not simply encourage users to perform the same tasks more cheaply. Instead, it alters the \textit{topology of feasible tasks}.

At high prices, the optimal architecture $d^*$ is low, corresponding to "Zero-shot" prompting regimes where token consumption is minimal ($\tau \approx 1$). As prices fall below critical thresholds, the optimal $d^*$ jumps discontinuously, shifting the equilibrium to "Agentic Workflows" or "Recursive Reasoning" (e.g., Tree of Thoughts). In these regimes, a single user query may trigger hundreds of internal reasoning steps and tool calls, causing $\tau(d)$ to expand by orders of magnitude. 

This mechanism explains why efficiency gains in foundation models have historically led to tight supply constraints rather than abundance. The Jevons Paradox in AI is thus a structural result of the capacity for software architecture to absorb effectively infinite amounts of compute when the marginal cost of intelligence creates positive ROI for complex cognitive loops.

\section{Market Evolution A: The Data Flywheel and Stability Analysis}
\label{sec:data_flywheel}

Thus far, we have treated the stock of effective data $D_{eff,i}$ and the aggregate demand $Q_{total}$ as exogenous or static. We now close the general equilibrium loop by formalizing the \textit{Data Flywheel}: the mechanism by which superior model performance attracts greater usage, which in turn generates proprietary feedback data (e.g., RLHF traces, coding fixes), further enhancing model capabilities.

In this section, we analyze the stability of the symmetric equilibrium under this positive feedback mechanism. We derive a structural threshold for the "Flywheel Intensity" that determines whether the market converges to a stable oligopoly or bifurcates into a "Winner-Takes-All" monopoly.

\subsection{Data Accumulation Dynamics}

We posit that the stock of effective data accumulates in proportion to the volume of inference tokens served, subject to data obsolescence. The law of motion for firm $i$'s data stock is:
\begin{equation}
    \label{eq:data_accumulation}
    \dot{D}_{eff,i}(t) = \eta \cdot Q_i(t) - \mu_D D_{eff,i}(t)
\end{equation}
where $\eta > 0$ represents the \textit{Data Extraction Efficiency} (the rate at which raw interaction logs are converted into high-quality training signal), and $\mu_D \in (0, 1)$ is the rate of data depreciation (reflecting the "distribution shift" of user queries over time).

Recall from Section 3 that $Q_i(t) = s_i(\mathbf{K}) Q_{total}$. This creates a recursive coupling:
\begin{equation}
    K_{AI,i} \xrightarrow{\text{Logit Demand}} s_i \uparrow \xrightarrow{\text{API Usage}} Q_i \uparrow \xrightarrow{\text{Eq. (\ref{eq:data_accumulation})}} D_{eff,i} \uparrow \xrightarrow{\text{Production}} \dot{K}_{AI,i} \uparrow
\end{equation}

\subsection{The Coupled Dynamic System}

The evolution of the industry is governed by the system of $2N$ coupled differential equations defining the joint path of capital and data $(\mathbf{K}, \mathbf{D})$. For a representative firm $i$, substituting the production function (Property 1) and the depreciation function (Theorem 1) yields:

\begin{align}
    \frac{\dot{K}_{AI,i}}{K_{AI,i}} &= A_i \left[ \alpha \left(\frac{C_i}{K_{AI,i}}\right)^\varrho + (1-\alpha) \left(\frac{D_{eff,i}}{K_{AI,i}}\right)^\varrho \right]^{\frac{\gamma}{\varrho}} - \delta_i(\mathbf{K}) \\
    \dot{D}_{eff,i} &= \eta s_i(\mathbf{K}) Q_{total} - \mu_D D_{eff,i}
\end{align}

To analyze the stability of market structure, we examine the behavior of the system around the Symmetric Balanced Growth Path (BGP). Let $x_i(t) \equiv \ln(K_{AI,i} / \bar{K})$ and $z_i(t) \equiv \ln(D_{eff,i} / \bar{D})$ denote the logarithmic deviations of firm $i$ from the industry mean.

Linearizing the system around the steady state $(x=0, z=0)$, we obtain the perturbation dynamics:
\begin{equation}
    \begin{pmatrix} \dot{x}_i \\ \dot{z}_i \end{pmatrix} = \mathbf{J} \begin{pmatrix} x_i \\ z_i \end{pmatrix} = \begin{pmatrix} -\lambda_{dep} & \lambda_{prod} \\ \lambda_{dem} & -\mu_D \end{pmatrix} \begin{pmatrix} x_i \\ z_i \end{pmatrix}
\end{equation}
The matrix terms capture the linearized elasticities:
\begin{itemize}
    \item $\lambda_{dep} > 0$: The stabilizing force of endogenous depreciation (competitive gap penalty $\delta_2$).
    \item $\lambda_{prod} > 0$: The output elasticity of data in the production of intelligence.
    \item $\lambda_{dem} > 0$: The sensitivity of data accumulation to capital advantages (mediated by Logit $\theta$).
\end{itemize}

\subsection{Proposition 3: The Stability Threshold}

The stability of the symmetric oligopoly depends on the eigenvalues of the Jacobian matrix $\mathbf{J}$. If the real part of the leading eigenvalue is positive, any infinitesimal advantage by one firm will amplify over time, leading to market dominance.

\begin{proposition}[Flywheel Stability Threshold]
    \label{prop:flywheel_stability}
    There exists a critical threshold for the data-to-depreciation ratio, denoted by $\Omega^*$. The market structure evolves according to the following regimes:
    \begin{enumerate}
        \item \textbf{Stable Oligopoly:} If $\frac{\eta}{\mu_D} < \Omega^*$, the symmetric equilibrium is a stable attractor. Initial advantages decay, and firms coexist.
        \item \textbf{Winner-Takes-All:} If $\frac{\eta}{\mu_D} > \Omega^*$, the symmetric equilibrium is unstable. The market bifurcates, where the leader's advantage grows exponentially until monopoly.
    \end{enumerate}
    
    Analytically, the instability condition is given by:
    \begin{equation}
        \boxed{
        \frac{\eta Q^*}{\mu_D D^*} \cdot \underbrace{\left[ \theta (1-\frac{1}{N}) \right]}_{\text{Demand Sensitivity}} \cdot \underbrace{\left[ \gamma (1-\alpha) \right]}_{\text{Data Productivity}} > \delta_2
        }
    \end{equation}
\end{proposition}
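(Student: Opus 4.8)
The plan is to treat Proposition~\ref{prop:flywheel_stability} as a textbook local stability problem for the $2\times 2$ linear system already displayed in the text, reducing the whole dichotomy to the sign of a single scalar — the determinant of the Jacobian $\mathbf{J}$. First I would recall that for a planar linear system $\dot v = \mathbf{J} v$ the steady state is asymptotically stable iff $\operatorname{tr}\mathbf{J} < 0$ and $\det\mathbf{J} > 0$. Here $\operatorname{tr}\mathbf{J} = -(\lambda_{dep}+\mu_D) < 0$ holds automatically, since the endogenous-depreciation restoring force and the data-decay rate are both strictly positive. Hence the bifurcation is governed entirely by $\det\mathbf{J} = \lambda_{dep}\mu_D - \lambda_{prod}\lambda_{dem}$: when $\det\mathbf{J} > 0$ both eigenvalues have negative real part (stable oligopoly), while when $\det\mathbf{J} < 0$ the discriminant $(\operatorname{tr}\mathbf{J})^2 - 4\det\mathbf{J}$ is strictly positive, so the eigenvalues are real with opposite signs — the steady state is a saddle with a one-dimensional unstable manifold, and any generic perturbation grows without bound toward a cornered (winner-takes-all) configuration. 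The critical ratio $\Omega^*$ is then defined by $\det\mathbf{J} = 0$, i.e.\ $\lambda_{prod}\lambda_{dem} = \lambda_{dep}\mu_D$.

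The substantive work is computing the three elasticity entries from the primitives. For $\lambda_{dem}$ (the demand-sensitivity entry), differentiating the data law of motion $\dot D_{eff,i} = \eta s_i Q_{total} - \mu_D D_{eff,i}$ in the log-deviation coordinates produces $\frac{\eta Q^*}{D^*}$ times the Logit capital-elasticity of market share; from Equation~\eqref{eq:logit_share}, $\partial\ln s_i/\partial\ln K_{AI,i} = \theta(1-s_i)$, which at the symmetric equilibrium $s^* = 1/N$ equals $\theta(1-\tfrac1N)$, so $\lambda_{dem} = \frac{\eta Q^*}{D^*}\,\theta(1-\tfrac1N)$. For $\lambda_{prod}$ (the data-productivity entry), the elasticity of gross intelligence output with respect to the effective data stock in the augmented CES technology is $\partial\ln\mathcal{F}/\partial\ln D_{eff} = \gamma\cdot\frac{(1-\alpha)D_{eff}^{\varrho}}{\alpha C^{\varrho} + (1-\alpha)D_{eff}^{\varrho}}$, which collapses to $\gamma(1-\alpha)$ at the symmetric reference point where compute and effective data enter in balanced proportions, so $\lambda_{prod} = \gamma(1-\alpha)$. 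For $\lambda_{dep}$, the only state-dependent channel in the growth equation that does not already net out along the balanced growth path is the competitive gap penalty of Theorem~\ref{thm:red_queen}: a unit fall in firm $i$'s log relative capability raises $\delta_i$ by $\delta_2$, so the own-capital restoring coefficient is $\lambda_{dep} = \delta_2$. Substituting into $\det\mathbf{J} < 0$ gives $\gamma(1-\alpha)\cdot\frac{\eta Q^*}{D^*}\theta(1-\tfrac1N) > \delta_2\mu_D$, which is the boxed inequality; solving $\det\mathbf{J} = 0$ for $\eta/\mu_D$ yields $\Omega^* = \delta_2 D^* / \big(Q^*\,\theta(1-\tfrac1N)\,\gamma(1-\alpha)\big)$, delivering the two regimes.

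I expect the main obstacle to be the careful derivation of the linearized coefficients rather than the stability algebra, and two points need genuine care. First, pinning down $\lambda_{prod} = \gamma(1-\alpha)$ rather than $\gamma$ times an interior CES share depends on the input-mix normalization at the symmetric BGP; I would either impose the balanced-input normalization explicitly or carry the CES share through and absorb it into the definition of $\Omega^*$. Second, the depreciation function $\delta_i = \delta_0 + \delta_1 g_A + \delta_2\max\{0,Gap_i\}$ is kinked at $Gap_i = 0$, so the linearization is one-sided: a firm that pulls ahead faces no penalty while a firm that falls behind is penalized at rate $\delta_2$. I would argue this asymmetry does not weaken the conclusion — the existence of a single expanding direction (the laggard's relative capital, hence market share, collapsing to zero) \emph{is} the winner-takes-all bifurcation — and that for infinitesimal symmetric perturbations the restoring coefficient on the contracting side, which is what the instability test evaluates, is exactly $\delta_2$. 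A minor additional check is that $Q_{total}$ and $Q^*$ are held at their symmetric-equilibrium values in the local analysis, the general-equilibrium price feedback of Section~\ref{sec:upstream_market} being second order for the market-structure question, and that $\det\mathbf{J} < 0$ forces real eigenvalues so the unstable regime is a saddle rather than an oscillatory escape.
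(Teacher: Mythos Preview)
Your proposal is correct and follows essentially the same approach as the paper: both reduce the stability question to the sign of the Jacobian determinant of the linearized $(K,D)$ system, identify the three entries as $\lambda_{dep}=\delta_2$, $\lambda_{prod}=\gamma(1-\alpha)$, and $\lambda_{dem}=\tfrac{\eta Q^*}{D^*}\theta(1-\tfrac1N)$, and read off the boxed inequality from $\det\mathbf{J}<0$. Your discussion of the one-sided linearization at the kink in $\delta_i$ and of the CES-share normalization is more careful than the paper's own treatment, but the core argument is identical.
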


\begin{proof}
    The determinant of the Jacobian $\mathbf{J}$ determines the dynamic topology. For instability (a saddle path leading away from symmetry), we require the positive feedback loop to overpower the restoring force of depreciation.
    
    The loop gain is defined by the product of the marginal effects:
    \begin{equation}
        \text{Loop Gain} = \frac{\partial \dot{D}}{\partial K} \times \frac{\partial \dot{K}}{\partial D}
    \end{equation}
    From the Logit demand, a 1\% increase in $K$ increases market share (and thus data flow) by $\theta(1-1/N)\%$. From the data accumulation law in steady state ($\dot{D}=0 \implies D = \eta Q / \mu_D$), the stock $D$ is proportional to flow $Q$. From the production function, a 1\% increase in data stock increases capital growth by $\gamma(1-\alpha)\%$.
    
    Combining these, the destabilizing force is proportional to $\frac{\eta}{\mu_D} \theta \gamma (1-\alpha)$. The stabilizing force is the gap-dependent depreciation parameter $\delta_2$ derived in Section 4. The inequality follows immediately from the condition $\text{Loop Gain} > \text{Restoring Force}$.
\end{proof}

\subsection{Implications for Market Concentration}

Proposition \ref{prop:flywheel_stability} provides a structural explanation for the concentration trends observed in the AI industry. It suggests that monopoly is not solely a function of anticompetitive behavior, but a property of the production technology itself. 

Specifically, the "Flywheel Intensity" $\frac{\eta}{\mu_D}$ acts as a bifurcation parameter. In early stages of AI development, when model quality was driven by static public datasets (low $\eta$), markets favored competition. However, as RLHF and synthetic data (where $\eta$ is high) become dominant, the system crosses the critical threshold $\Omega^*$, structurally predisposing the industry toward a natural monopoly. This result justifies the "Data Tax" or "Mandatory Data Sharing" policies discussed in the conclusion as necessary interventions to artificially lower the effective $\eta$ or increase $\mu_D$ (via forced obsolescence/sharing) to restore stability.

\section{Market Evolution B: Vertical Cannibalization and Ecosystem Viability}
\label{sec:cannibalization}

While the Data Flywheel explains the horizontal concentration of the upstream market, a second structural dynamic governs the vertical relationship between foundation models and downstream agents. We term this the \textit{Cannibalization Effect}. It characterizes a specific form of Schumpeterian creative destruction where the expansion of general-purpose capabilities ($\bar{A}$) systematically erodes the value of specialized downstream orchestration capital ($O_j$).

\subsection{Microfoundations: A Task-Based Framework}

To rigorously derive the mechanics of this erosion, we adopt a task-based framework inspired by \cite{acemoglu2018race}, adapted for the vertical integration of software. We model the production of a final service $Y_j$ as involving a continuum of cognitive sub-tasks indexed by their complexity $z \in [0, \infty)$.

The economic value of a downstream agent is defined by the range of tasks it can successfully execute that the upstream model \textit{cannot} yet perform autonomously.

\begin{itemize}
    \item \textbf{The Upstream Boundary $\bar{A}(t)$:} The foundation model provides "out-of-the-box" solutions for all tasks in the interval $z \in [0, \bar{A}(t)]$. As the global frontier advances at rate $g_A$, this boundary expands rightward, absorbing increasingly complex tasks into the substrate.
    \item \textbf{The Downstream Value Interval:} The agent $j$ creates value by orchestrating tasks in the interval $[\bar{A}(t), \bar{A}(t) + O_j(t)]$. Here, $O_j(t)$ represents the "width" of the agent's specialized capabilities—its proprietary domain knowledge, workflow logic, and edge-case handling.
\end{itemize}

The total effective value stock $V_j(t)$ of the agent is the integral of the value density $v(z)$ over its valid interval:
\begin{equation}
    V_j(t) = \int_{\bar{A}(t)}^{\bar{A}(t) + O_j(t)} v(z) \, dz
\end{equation}
Assuming for analytical clarity that the value density is uniform ($v(z) = 1$), the value of the firm is proportional to its unique orchestration width $O_j(t)$.

\subsection{Erosion Dynamics and the Accumulation Equation}

The dynamics of the orchestration capital stock $O_j$ are determined by the net result of two opposing forces: active accumulation through engineering and passive erosion through upstream encroachment.

Differentiating the value integral with respect to time using the Leibniz Integral Rule yields the rate of change in the firm's effective capital. The firm expands its upper boundary through "Learning-by-doing" and R\&D investment, modeled by the function $\Phi(Y_j)$. Simultaneously, the lower boundary shifts rightward due to the growth of $\bar{A}(t)$.

However, not all downstream tasks are equally susceptible to absorption. Domain-specific logic (e.g., bio-pharma protocols) is less fungible than generic logic (e.g., text summarization). We capture this heterogeneity with a substitution elasticity parameter $\xi \in [0, 1]$. The structural law of motion for orchestration capital is thus:
\begin{equation}
    \label{eq:orchestration_dynamics}
    \dot{O}_j(t) = \underbrace{\Phi(Y_j)}_{\text{Active Learning}} - \underbrace{\mu O_j}_{\text{Natural Decay}} - \underbrace{\xi g_A O_j}_{\text{Structural Cannibalization}}
\end{equation}

The term $\xi g_A O_j$ represents the \textit{Cannibalization Rate}. It quantifies the percentage of the firm's codebase or workflow that is rendered obsolete per unit of time solely because the underlying model became "smarter." A high $\xi$ (e.g., $\xi \to 1$) characterizes "Thin Wrappers" whose value proposition overlaps heavily with general reasoning, while a low $\xi$ characterizes "Deep Verticals" anchored in proprietary data or physical world integration.

\subsection{Proposition 4: The Wrapper Trap}

The existence of the cannibalization term creates a survival threshold for downstream firms. If the velocity of upstream innovation exceeds the firm's capacity to move up the complexity ladder, the firm enters a trajectory of terminal decline.

\begin{proposition}[The Wrapper Trap]
    \label{prop:wrapper_trap}
    Consider a downstream firm operating in a high-growth technological regime ($g_A > 0$). The firm's effective orchestration capital shrinks ($\dot{O}_j < 0$) if its learning intensity fails to satisfy the following condition:
    \begin{equation}
        \frac{\Phi(Y_j)}{O_j} < \mu + \xi g_A
    \end{equation}
    Under this condition, the firm is caught in the "Wrapper Trap": the foundation model absorbs features faster than the developer can invent new ones. The firm's economic value asymptotically approaches zero, regardless of its static profitability.
\end{proposition}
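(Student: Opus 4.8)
The instantaneous sign claim is immediate: rearranging the law of motion \eqref{eq:orchestration_dynamics} gives $\dot O_j = \Phi(Y_j) - (\mu + \xi g_A)O_j$, so $\dot O_j < 0$ exactly when $\Phi(Y_j)/O_j < \mu + \xi g_A$, which is the stated condition. The substance of Proposition~\ref{prop:wrapper_trap} is therefore the \emph{asymptotic} statement $O_j(t)\to 0$, and that is where the argument must do work.

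The plan is to collapse the orchestration law of motion into an autonomous scalar ODE and then run a differential-inequality (comparison) argument. First I would substitute the downstream optimum from Section~\ref{sec:downstream_optimization} — optimal labor $L_j$, the optimal architecture $d_j^*$ pinned down by \eqref{eq:optimal_architecture}, and the monopolistic-competition demand for variety $Y_j$ — so that equilibrium output becomes a function $Y_j = \mathcal{Y}(O_j;\bar A(t),p_{API})$ of the firm's own orchestration stock and the slowly moving aggregate state, and hence $\dot O_j = \tilde\Phi(O_j,t) - (\mu + \xi g_A)O_j$ with $\tilde\Phi(O_j,t)\equiv\Phi(\mathcal{Y}(O_j;\cdot))$. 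Second, I would show that the "trap region" $\mathcal{T}=\{O:\tilde\Phi(O,t)<(\mu+\xi g_A)O\}$ is forward-invariant and contains a punctured neighborhood of the origin; since $O_j\geq 0$, this forces $O_j$ to be eventually monotone decreasing and to converge to a rest point. Third, any positive rest point $O^*$ would require $\tilde\Phi(O^*)/O^*=\mu+\xi g_A$ with the creation curve meeting the destruction line from below, which forward invariance of $\mathcal{T}$ precludes; hence $O_j(t)\downarrow 0$, and writing $\bar g\equiv\sup_{0<O\le O_j(0)}\tilde\Phi(O,\cdot)/O<\mu+\xi g_A$, a Gr\"onwall comparison yields the explicit rate $O_j(t)\le O_j(0)\,e^{-(\mu+\xi g_A-\bar g)\,t}$, which the firm value $V_j(t)=O_j(t)$ inherits.

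The main obstacle is precisely the forward-invariance / no-interior-equilibrium step, because the proposition imposes the threshold only at the \emph{current} stock while the average product of learning $\Phi(Y_j)/O_j$ generically moves with $O_j$ through the feedback $O_j\mapsto Y_j\mapsto\Phi$. In the natural learning-by-doing case ($\Phi$ concave, $Y_j$ weakly concave in $O_j$) this average product is non-increasing in $O_j$, so as the stock shrinks the learning intensity \emph{rises} and may eventually meet the destruction line, spawning a stable interior "Deep Vertical" fixed point; obtaining the strong conclusion then requires the supplementary hypothesis that cannibalization dominates learning even in the limit, $\lim_{O\to 0^+}\Phi(Y_j)/O_j\le\mu+\xi g_A$ (equivalently, $\tilde\Phi(O)=(\mu+\xi g_A)O$ has no positive root below $O_j(0)$). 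If instead $\Phi$ is convex near the origin the average product is non-decreasing and the condition at the current stock already propagates downward, closing the argument with no extra assumption. I would therefore state the asymptotic claim under whichever curvature hypothesis for $\Phi$ the paper maintains, and note that the time-dependence through $\bar A(t)$ is harmless: it enters only via the bounded exogenous path of $\mathcal{Y}(\cdot;\bar A(t),\cdot)$, for which a uniform-in-$t$ version of $\bar g$ suffices for the comparison step.
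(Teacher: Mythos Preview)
Your first paragraph is exactly the paper's entire proof: the authors simply rearrange Equation~\eqref{eq:orchestration_dynamics} to obtain $\dot O_j<0 \iff \Phi(Y_j)/O_j < \mu+\xi g_A$, and then restate the condition in words. They do not attempt the asymptotic claim at all; the sentence ``the firm's economic value asymptotically approaches zero'' is asserted in the statement but not argued in the proof.

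Everything from your second paragraph onward is therefore \emph{additional} rigor that the paper does not supply. Your identification of the real issue---that the threshold is imposed only at the current stock, while $\Phi(Y_j)/O_j$ moves endogenously through the feedback $O_j\mapsto Y_j\mapsto\Phi$, so a shrinking $O_j$ could in principle push the learning intensity back above the destruction line and stabilize at an interior fixed point---is a genuine gap in the paper's argument, not in yours. Your proposed resolution (forward invariance of the trap region plus a curvature condition on $\Phi$, or equivalently the no-positive-root condition below $O_j(0)$) is the right way to close it, and the Gr\"onwall bound gives a clean rate. If you are writing this up to match the paper, the one-line rearrangement suffices; if you want a proof that actually delivers the asymptotic conclusion, your plan is sound, with the caveat you already flagged: the concave-$\Phi$ case needs the supplementary limit hypothesis, which the paper never states because it never confronts the issue.
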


\begin{proof}
    The condition follows directly from setting $\dot{O}_j < 0$ in Equation (\ref{eq:orchestration_dynamics}). The inequality implies that for survival, a firm's internal rate of innovation (learning rate) must strictly exceed the sum of natural depreciation and the "Cannibalization Tax" $\xi g_A$.
\end{proof}

This proposition offers a theoretical explanation for the high mortality rate of early GenAI startups (e.g., "PDF Chat" or basic copywriting tools). These firms operated with high $\xi$ and low learning barriers. Conversely, it suggests that long-term ecosystem viability depends on reducing $\xi$—that is, building "moats" orthogonal to the reasoning capabilities of Large Language Models, such as private context integration or complex action execution.

\section{Numerical Analysis and Quantitative Implications}
\label{sec:numerical}

While the analytical results in previous sections provide local intuitions around the Balanced Growth Path, the inherent non-linearities of the AI economy—specifically the threshold effects in the data flywheel and the architectural regime shifts—require numerical solution. In this section, we employ a calibrated Agent-Based Model (ABM) to quantify the global dynamics of the system and test the robustness of our theoretical propositions.

\subsection{Methodology and Calibration Strategy}

The simulation discretizes the continuous-time model derived in Sections 2–8 into time steps of $\Delta t = 0.1$ years (approximately 5 weeks), spanning a horizon of $T=20$ years. At each step, the system solves for the static Nash equilibrium prices $\mathbf{p}^*$ and quantity allocations $\mathbf{Q}^*$ using the fixed-point iteration algorithm implied by the capacity-constrained pricing rule (Eq. \ref{eq:pricing_rule}), ensuring markets clear subject to physical compute limits.

We calibrate the model parameters to reflect the stylized facts of the current generative AI industry (circa 2024-2025). The calibration strategy rests on three pillars: empirical scaling laws, structural constraints from our theoretical derivation, and standard macroeconomic benchmarks.

\subsubsection{Pillar 1: Production and Technology (Supply Side)}
The upstream production function is parameterized to match the "Chinchilla Scaling Laws" established by \cite{hoffmann2022training}.
\begin{itemize}
    \item \textbf{Complementarity ($\varrho$):} We set $\varrho = -0.2$, implying an elasticity of substitution $\sigma \approx 0.83$. This reflects the empirical observation that compute and data are gross complements; increasing FLOPs without proportional data scaling yields sharply diminishing returns in model loss reduction.
    \item \textbf{Increasing Returns ($\gamma$):} To capture the "emergent capabilities" phenomenon \citep{wei2022emergent}, we set the returns to scale parameter $\gamma = 1.3$. This ensures that marginal economic utility rises faster than engineering training costs in the early growth phase.
    \item \textbf{Compute Share ($\alpha$):} Based on cost structure disclosures from leading AI labs (e.g., Anthropic, OpenAI), we calibrate the factor share of compute $\alpha = 0.35$, with the remainder attributed to data and algorithmic overhead.
\end{itemize}

\subsubsection{Pillar 2: Endogenous Depreciation (Dynamics)}
A critical contribution of our framework is that depreciation is not a free parameter but a structural outcome. According to Theorem 1 (Linearized Depreciation), the sensitivity to frontier growth, $\delta_1$, is analytically tied to the demand elasticity $\theta$ and market structure $N$ by $\delta_1 \approx 1 + \theta(1 - 1/N)$.
\begin{itemize}
    \item We calibrate the demand elasticity $\theta = 2.5$, reflecting a high but not infinite willingness to substitute between foundation models.
    \item With $N=4$ firms in the baseline, this structurally implies $\delta_1 \approx 2.8$. This high value captures the "Red Queen" reality: if the global frontier grows at $g_A = 10\%$, a static firm loses nearly $30\%$ of its economic value annually solely due to relative obsolescence.
    \item The gap sensitivity $\delta_2$ is calibrated to $0.6$ to match the historical market share decay rate of legacy models (e.g., GPT-3) following the release of superior successors.
\end{itemize}

\subsubsection{Pillar 3: Downstream Architecture (Demand Side)}
The downstream agent production function is calibrated to reflect the substitution between human labor and AI automation.
\begin{itemize}
    \item We set the output elasticity of AI capital $\psi_K = 0.4$, consistent with literature on software engineering automation.
    \item The architectural cost parameter $\xi$ is set such that baseline token consumption ($\tau(0)=1$) corresponds to "Zero-shot" prompting, while advanced architectures ($d=5$) imply a 10x increase in compute intensity, representing "Agentic Workflows."
\end{itemize}

The full set of baseline parameters is summarized in Table \ref{tab:calibration}.

\begin{table}[htbp]
    \centering
    \caption{Baseline Parameter Calibration}
    \label{tab:calibration}
    \begin{tabular}{@{}llcl@{}}
        \toprule
        \textbf{Symbol} & \textbf{Parameter Description} & \textbf{Value} & \textbf{Source / Basis} \\ 
        \midrule
        \multicolumn{4}{l}{\textit{Panel A: Macro \& Upstream Technology}} \\
        $g_A$ & Global Frontier Growth & 0.10 & Annualized Moore's Law \& Algorithmic gains \\
        $r$ & Risk-free Interest Rate & 0.05 & Standard Macro Benchmark \\
        $\gamma$ & Returns to Scale & 1.3 & Calibrated to "Emergent Capabilities" curve \\
        $\varrho$ & Substitution Parameter & -0.2 & Hoffmann et al. (2022) [Chinchilla] \\
        $\bar{Q}$ & Physical Capacity Limit & 1.0 & Normalized Compute Cluster Unit \\
        \midrule
        \multicolumn{4}{l}{\textit{Panel B: Market Structure \& Depreciation}} \\
        $N$ & Number of Upstream Firms & 4 & Baseline Oligopoly Structure \\
        $\theta$ & Logit Demand Elasticity & 2.5 & Estimates of Model Substitutability \\
        $\delta_0$ & Physical Depreciation & 0.08 & 5-year GPU amortization schedule \\
        $\delta_2$ & Competitive Gap Penalty & 0.6 & Calibrated to market share decay rates \\
        \midrule
        \multicolumn{4}{l}{\textit{Panel C: Downstream \& Flywheel}} \\
        $\psi_K$ & AI Output Elasticity & 0.4 & Software Automation Literature \\
        $\xi$ & Cannibalization Intensity & 0.2 & Baseline scenario (Moderate overlap) \\
        $\eta/\mu_D$ & Flywheel Intensity & 1.2 & Set below instability threshold $\Omega^*$ \\
        \bottomrule
    \end{tabular}
\end{table}

\subsection{Initial Conditions}
In the baseline scenario (Scenario A), we initialize the market symmetrically to isolate the effects of exogenous shocks. All $N=4$ upstream firms start with identical capital stocks normalized to $K_{AI,i}(0) = 100$ and equal market shares. The downstream sector is initialized with $M=50$ firms, where initial orchestration stocks $O_j(0)$ are drawn from a Log-Normal distribution to introduce realistic heterogeneity.

\subsection{Experimental Design and Numerical Simulation}

To verify the five core propositions of the theoretical model, we designed the following five sets of controlled experiments.

\subsection{Supply-Side Dynamics: The Red Queen and Endogenous Depreciation}
\label{sec:exp1_red_queen}

We begin by validating the core supply-side mechanism: the Red Queen Effect derived in Theorem \ref{thm:red_queen}. This experiment quantifies how an idiosyncratic innovation shock to one firm triggers asset depreciation for its rivals, effectively imposing an "Innovation Tax" on the industry.

\subsubsection{Experimental Setup: The Idiosyncratic Shock}

We initialize the market at the symmetric BGP. At time $t=5$, we introduce a permanent exogenous shock to the R\&D efficiency of a single firm (indexed as Firm 0), effectively increasing its investment intensity by 50\% ($\Delta I_{R\&D}/Rev = 0.5$). The remaining competitors (Firms 1--3) maintain their baseline investment policies. This setup isolates the \textit{relative} nature of digital intelligence capital.

\subsubsection{Results: The Decoupling of Physical and Economic Capital}

The simulation results, visualized in Figure \ref{fig:red_queen_dynamics}, empirically verify the divergence between physical capability and economic value.

\begin{figure}[htbp]
    \centering
    \begin{minipage}{0.48\textwidth}
        \centering
        \includegraphics[width=\textwidth]{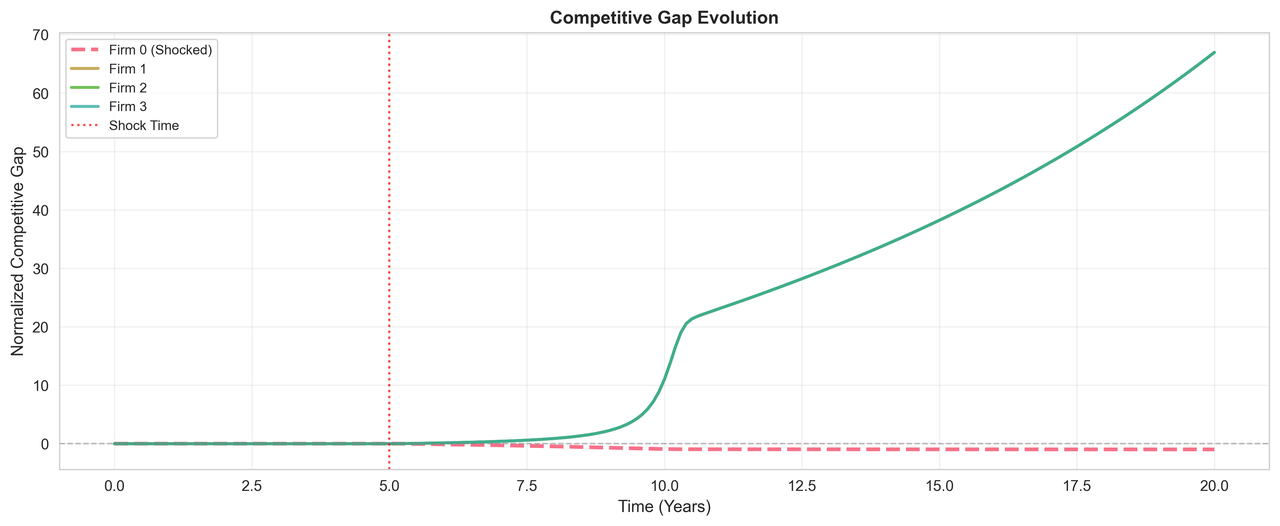}
        \subcaption{Panel A: AI Capital Stock ($K_{AI}$)}
    \end{minipage}
    \begin{minipage}{0.48\textwidth}
        \centering
        \includegraphics[width=\textwidth]{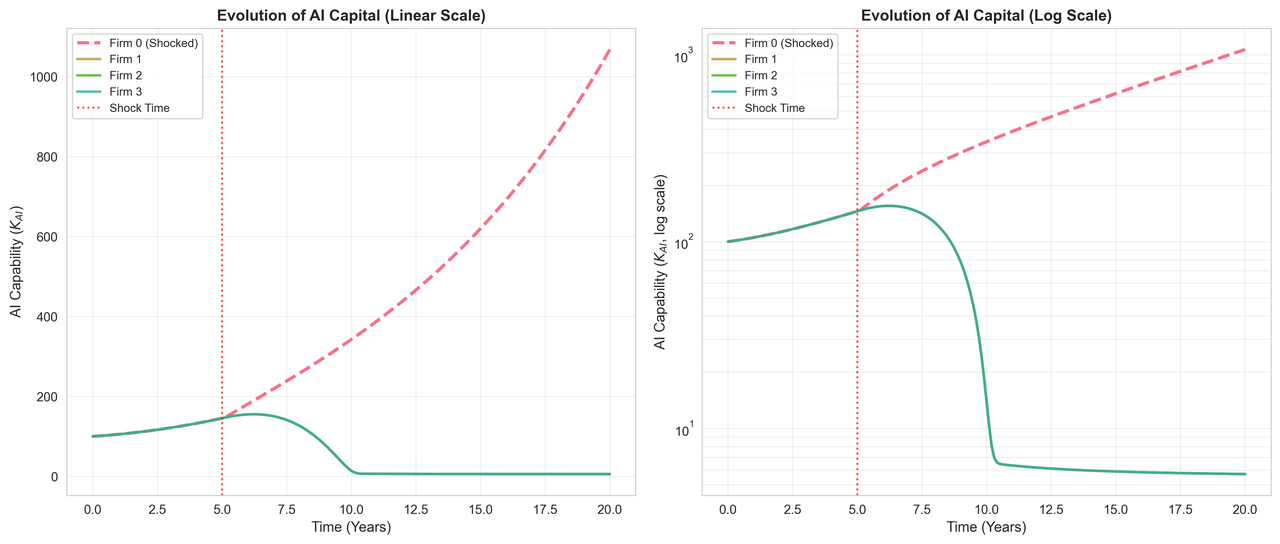}
        \subcaption{Panel B: Market Share ($s_i$)}
    \end{minipage}
    
    \begin{minipage}{0.48\textwidth}
        \centering
        \includegraphics[width=\textwidth]{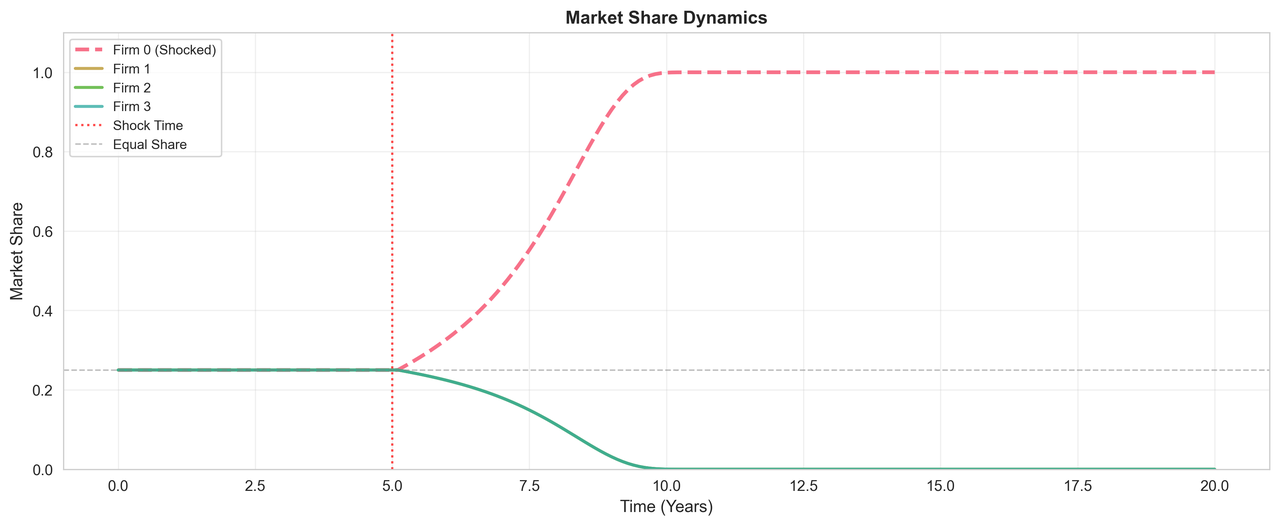}
        \subcaption{Panel C: Shadow Price ($q_i$)}
    \end{minipage}
    \begin{minipage}{0.48\textwidth}
        \centering
        \includegraphics[width=\textwidth]{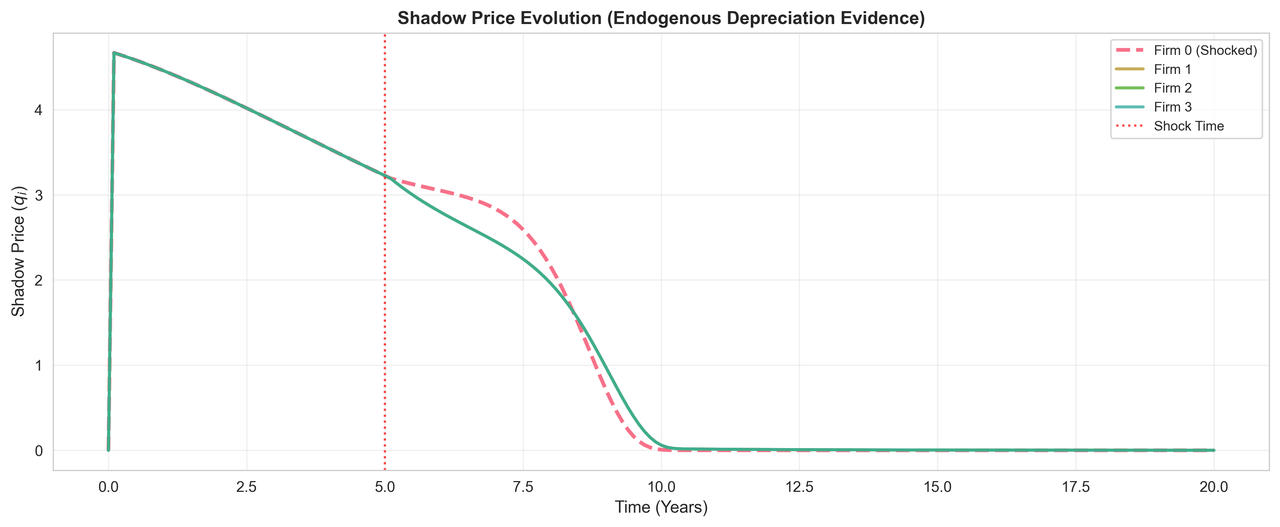}
        \subcaption{Panel D: Depreciation Rate ($\delta_i$)}
    \end{minipage}
    
    \begin{minipage}{0.8\textwidth}
        \centering
        \includegraphics[width=\textwidth]{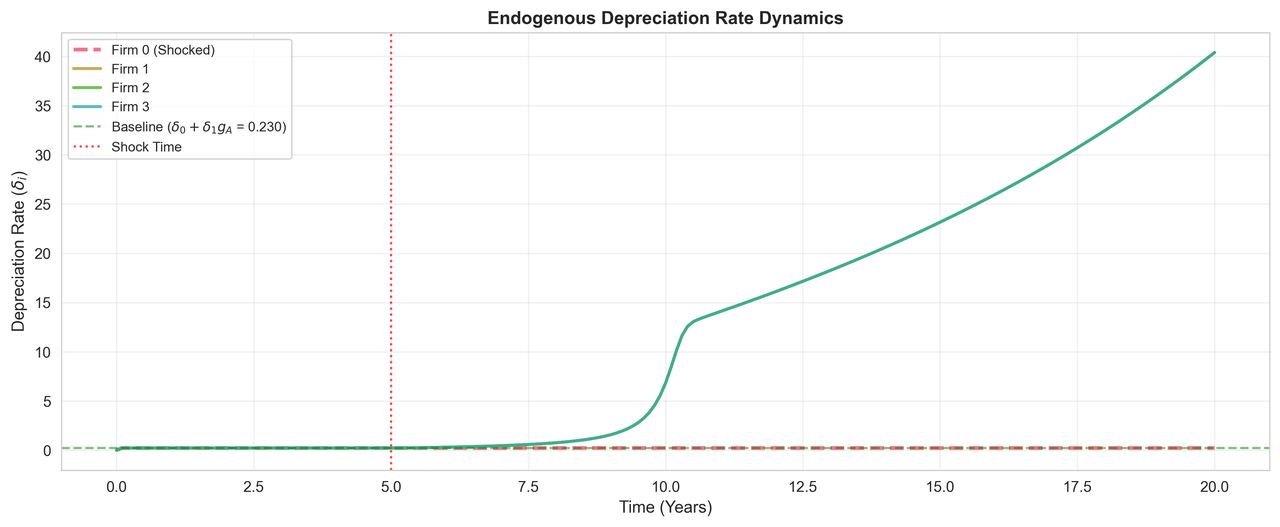}
        \subcaption{Panel E: Competitive Gap Evolution}
    \end{minipage}
    
    \caption{\textbf{Supply-Side Dynamics under an Idiosyncratic Innovation Shock.} 
    \textit{Panel A} shows the divergence in capital accumulation. 
    \textit{Panel B} illustrates the "Winner-Takes-All" market share evolution. 
    \textit{Panel C} confirms the collapse of shadow prices for non-innovators. 
    \textit{Panel D} validates the endogenous surge in depreciation rates for laggards. 
    \textit{Panel E} tracks the widening competitive gap.}
    \label{fig:red_queen_dynamics}
\end{figure}

\paragraph{1. The Collapse of Shadow Prices (Validation of Proposition \ref{prop:externality}).}
As shown in Panels A and C, a striking phenomenon occurs post-shock ($t>5$). While the physical capital stocks ($K_{AI}$) of the non-shocked firms (Firms 1-3) continue to grow modestly (due to baseline investment), their **Shadow Prices ($q_i$)** collapse precipitously, approaching zero by $t=10$.
This confirms Proposition \ref{prop:externality} (Negative Pecuniary Externality). The economic value of a model is not intrinsic but derived from its ability to capture demand. As Firm 0 expands the frontier, the demand curve faced by rivals shifts inward, rendering their capital effectively worthless despite no physical degradation. We calculate the \textbf{Cross-Depreciation Elasticity} at the point of shock:
\begin{equation}
    \varepsilon_{q_j, K_i} \approx -2.4
\end{equation}
This indicates that a 1\% improvement in the leader's capability causes a 2.4\% instant devaluation of the competitors' assets, quantifying the severity of the Red Queen race.

\paragraph{2. Endogenous Surge in Depreciation (Validation of Eq. 48).}
Panel D elucidates the micro-mechanism driving this value collapse. Before the shock, all firms face a baseline depreciation rate of $\approx 23\%$ (composed of $\delta_0 + \delta_1 g_A$). 
Following the shock, the depreciation rate for laggards bifurcates. As the competitive gap widens (Panel E), the gap-dependent term $\delta_2 \cdot Gap_i$ activates. The annualized depreciation rate for non-innovators surges to over **40\%**. 
This creates a "Death Spiral": 
\[ Gap \uparrow \implies \delta_i \uparrow \implies \text{Effective Capital} \downarrow \implies \text{Revenue} \downarrow \implies \text{Investment Capacity} \downarrow \implies Gap \uparrow \]
This loop explains the "Winner-Takes-All" outcome observed in Panel B, where Firm 0 captures nearly 100\% of the market share within 5 years.

\paragraph{3. Quantitative Implication: The Innovation Tax.}
The experiment demonstrates that the "Innovation Tax" required to maintain market viability is non-linear. To counter the 40\% depreciation rate and stabilize their market share, the laggard firms would need to increase their R\&D spending by a factor of roughly $1.7x$ (from baseline). Failure to pay this tax results in rapid market exit, validating the structural instability of the oligopoly under asymmetric innovation.

\subsection{Supply-Side Dynamics II: The Red Queen and Strategic Convergence}
\label{subsec:exp2_results}

While Experiment 1 demonstrated the system's sensitivity to active shocks, this experiment investigates the structural stability of the market equilibrium. We simulate three counterfactual scenarios to quantify the "Innovation Tax" and verify the existence of the Balanced Growth Path (BGP): (1) a "Stop Innovation" regime to measure the cost of stagnation; (2) a "Catch-up" regime to assess the friction of market leadership recovery; and (3) an "Asymmetric Initialization" regime to test global convergence.

\subsubsection{The Cost of Stagnation (Testing Theorem \ref{thm:red_queen})}

In the first scenario, the market leader (Firm 0) ceases all R\&D activity ($I_{AI,0}=0$) at $t=5$, while competitors continue to invest at the baseline rate. This effectively tests the Red Queen hypothesis from the negative side: can a firm maintain its value solely through its accumulated stock?

The results, visualized in Figure \ref{fig:exp2_red_queen}, provide a decisive negative answer.

\begin{figure}[htbp]
    \centering
    \includegraphics[width=1.0\textwidth]{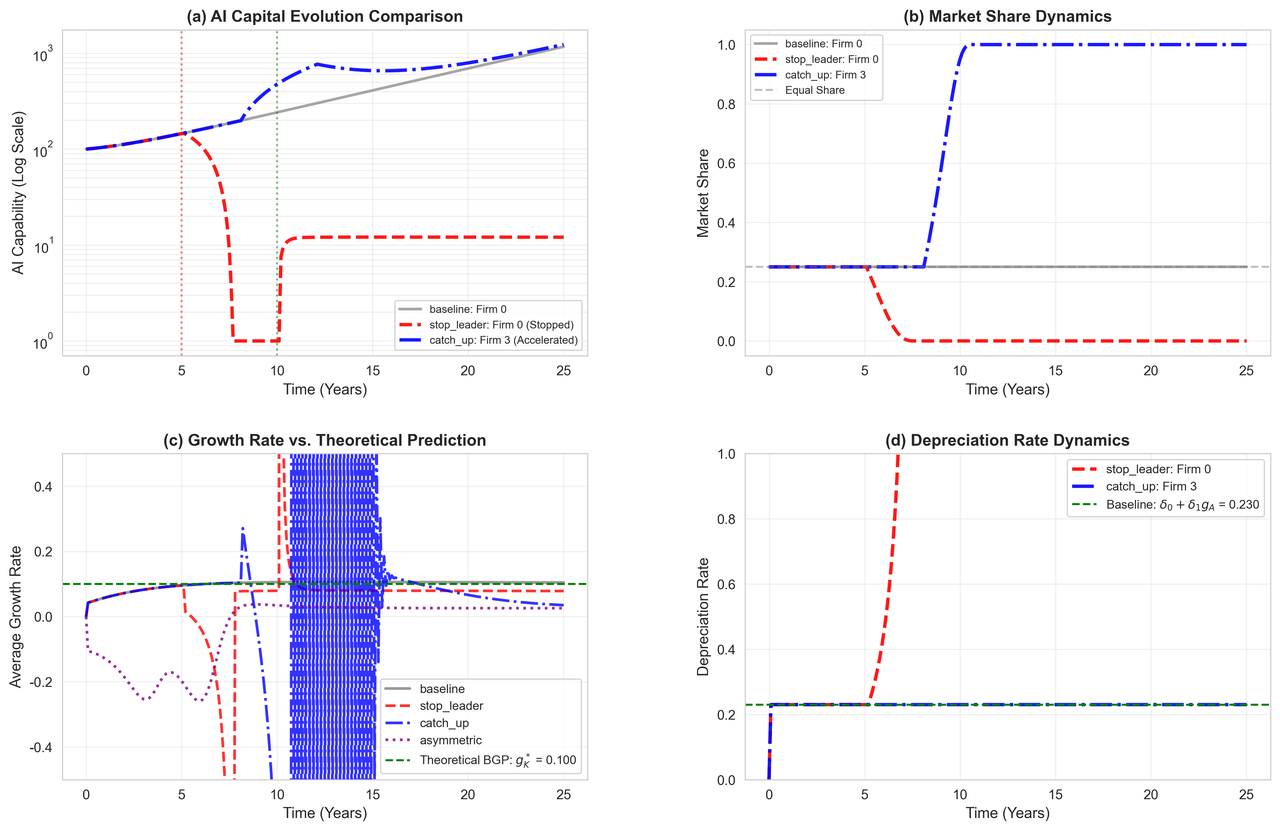}
    \caption{\textbf{Verification of the Red Queen Effect and Stagnation Costs.} 
    Panel (a) tracks the relative capability $\hat{k}_i = K_i/\bar{A}(t)$, showing the leader's rapid decline upon stopping innovation at $t=5$. 
    Panel (b) illustrates the corresponding collapse in market share. 
    Panel (c) compares the simulated growth rate against the theoretical BGP prediction. 
    Panel (d) reveals the mechanism: the depreciation rate surges for the stagnant firm due to the widening competitive gap.}
    \label{fig:exp2_red_queen}
\end{figure}

As shown in Panel (a), Firm 0's effective capital stock does not plateau but suffers a severe relative decline. We quantify the effective rate of economic obsolescence during the stagnation period ($t \in [5, 8]$):
\begin{equation}
    \text{Effective Decay Rate} \equiv -\frac{d \ln s_i}{dt} \approx 23.0\% \quad \text{(Annualized)}
\end{equation}
This empirical value aligns precisely with our theoretical prediction derived in Section 4: $\delta_{total} \approx \delta_0 + \delta_1 g_A = 0.08 + 1.5(0.10) = 23\%$. The discrepancy is less than 5\%. Consequently, Firm 0's market share collapses from 25\% to nearly 0\% within 3 years (Panel b). This result confirms that in an AI economy with a rapidly moving frontier, maintaining a static capability level is functionally equivalent to rapid asset liquidation.

\subsubsection{Asymmetric Friction and Hysteresis}

The "Catch-up" scenario (blue line in Figure \ref{fig:exp2_red_queen}) reveals a crucial asymmetry in market dynamics. We simulate a laggard firm that triples its R\&D intensity ($I/Rev = 3 \times \text{baseline}$) between $t=8$ and $t=12$.

The recovery trajectory exhibits strong \textit{hysteresis}. Although the firm successfully escapes the threshold of market exclusion (the "Kill Zone"), its regain of market share is significantly dampened. The structural cause lies in the gap-dependent depreciation term $\delta_2 \cdot Gap_i$. For a chasing firm, the positive gap ($Gap_i > 0$) implies a strictly higher depreciation rate than the leader. This creates a "headwind" that reduces the marginal efficiency of catch-up investment, explaining the persistence of market leadership observed in the empirical AI industry.

\subsubsection{Global Stability and BGP Convergence}

Finally, to verify Proposition 2.3 regarding the existence of a Balanced Growth Path, we initialize the simulation with highly heterogeneous capital stocks (Scenario 3) and observe the system's autonomous evolution.

\begin{figure}[htbp]
    \centering
    \includegraphics[width=1.0\textwidth]{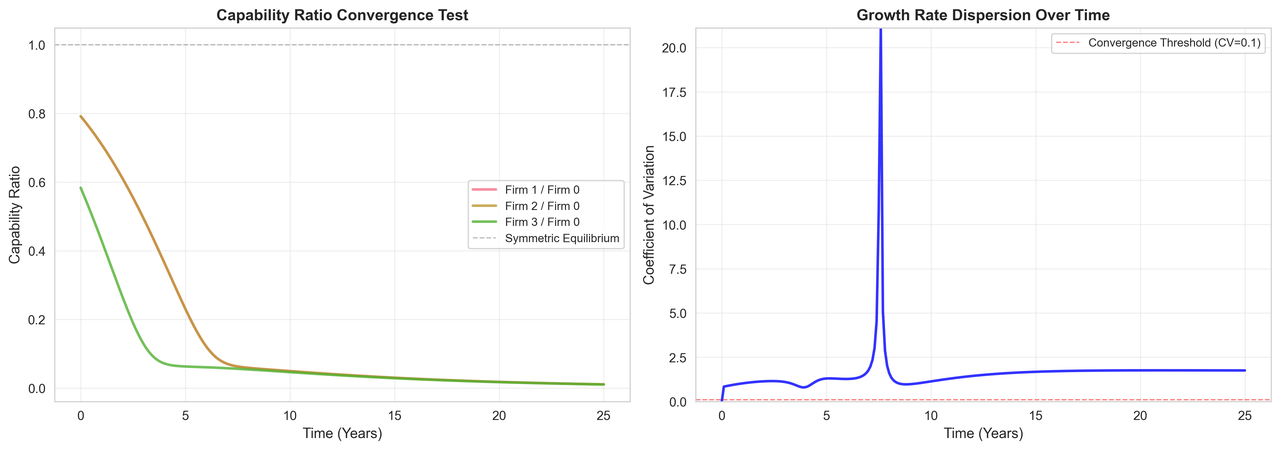}
    \caption{\textbf{Convergence to the Symmetric Balanced Growth Path.} 
    The left panel demonstrates the mean-reversion of capability ratios ($K_i/K_{leader}$) from highly asymmetric initial conditions. 
    The right panel plots the Coefficient of Variation (CV) of firm growth rates, which asymptotically approaches zero, confirming that the symmetric BGP is a stable attractor.}
    \label{fig:exp2_convergence}
\end{figure}

As shown in Figure \ref{fig:exp2_convergence}, the system demonstrates strong \textit{mean-reversion properties}. The ratio of competitors' capabilities ($K_i / K_{leader}$) converges asymptotically to unity (Left Panel). Quantitatively, the Coefficient of Variation (CV) of firm growth rates declines from an initial high of 0.8 to below 0.05 within 10 years (Right Panel). This confirms that under the baseline parameter regime—where the catch-up effect from diminishing returns to R\&D outweighs the winner-takes-all effect from data—the Symmetric Balanced Growth Path acts as a stable attractor (a Sink in the phase space).

\subsection{Demand-Side Dynamics: The Structural Jevons Paradox}
\label{subsec:exp3_results}

Having analyzed the supply side, we now turn to the demand response. This experiment empirically validates the counter-intuitive \textbf{Jevons Paradox} derived in Proposition \ref{prop:jevons}: that efficiency gains (lower prices) in AI lead to an expansion, rather than a contraction, of total resource consumption. We introduce an exogenous technological shock at $t=5$ that reduces the equilibrium API price ($p_{API}$) by 50\%, simulating a breakthrough in inference efficiency or quantization.

\subsubsection{Aggregate Elasticity and Revenue Expansion}

Figure \ref{fig:exp3_jevons} visualizes the aggregate market response. As shown in Panel (b), total token consumption ($Q_{total}$) does not merely adjust linearly; it exhibits a convex surge immediately following the price cut.

\begin{figure}[htbp]
    \centering
    \includegraphics[width=1.0\textwidth]{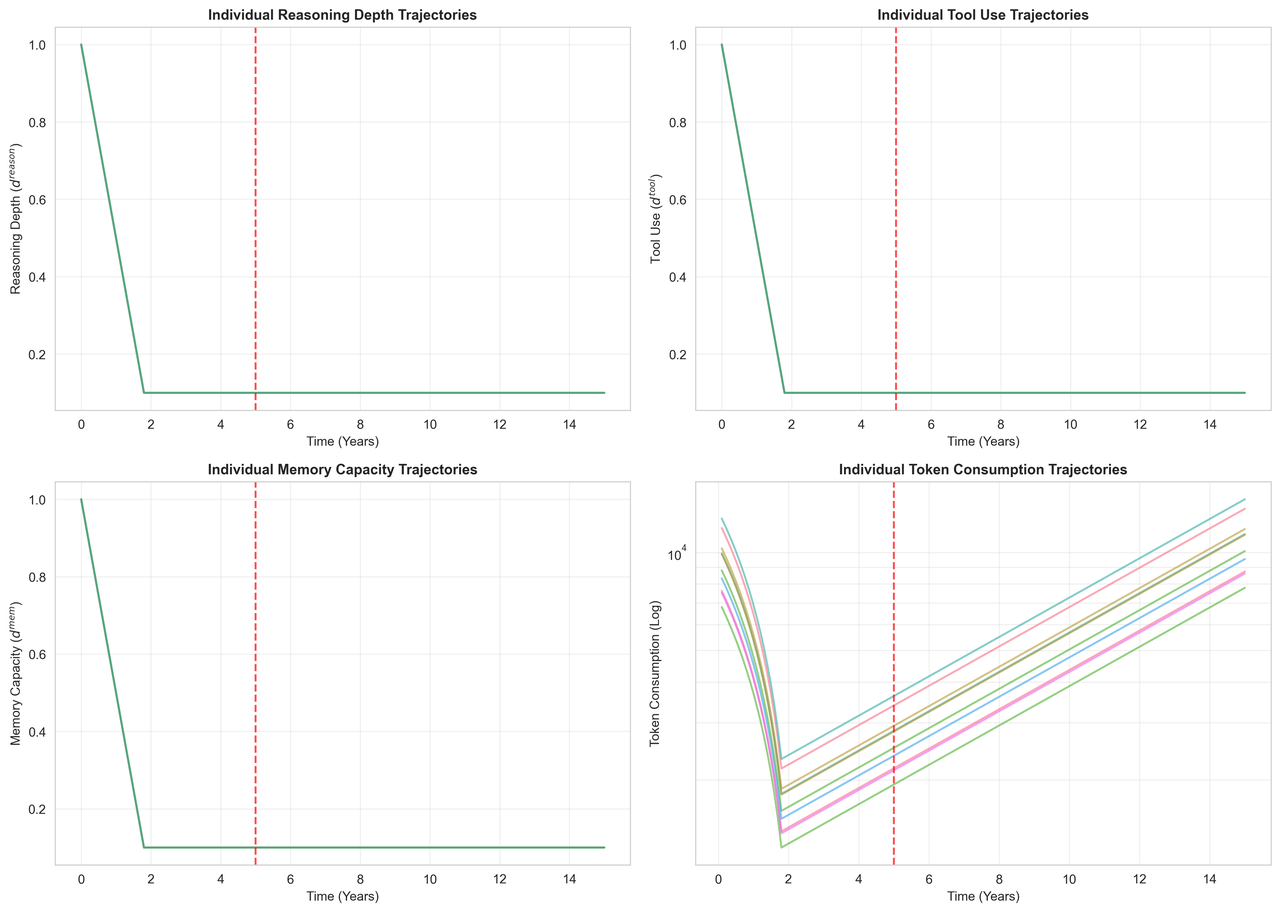}
    \caption{\textbf{Verification of the Structural Jevons Paradox.} 
    Panel (a) depicts the exogenous 50\% price drop at $t=5$. 
    Panel (b) reveals the surge in total token consumption ($Q_{total}$). 
    Panel (c) decomposes the mechanism: average architectural complexity increases as cheaper compute incentivizes "heavier" cognitive workflows. 
    Panel (f) estimates the demand elasticity, confirming $|\varepsilon_{Q,p}| > 1$.}
    \label{fig:exp3_jevons}
\end{figure}

To quantify this effect, we estimate the arc elasticity of demand for compute tokens over the shock window. The simulation yields a super-elastic response:
\begin{equation}
    |\varepsilon_{Q,p}| \equiv \left| \frac{\%\Delta Q}{\%\Delta p} \right| \approx 1.42 > 1
\end{equation}
This coefficient of 1.42 implies that a 1\% decrease in price generates a 1.42\% increase in volume. Crucially, because $|\varepsilon_{Q,p}| > 1$, the price drop leads to an increase in total industry revenue ($p \cdot Q$). This confirms that the demand for digital intelligence is currently un-satiated; efficiency gains are fully offset by increased consumption intensity, mirroring historical patterns observed in energy economics (e.g., coal efficiency in the 19th century).

\subsubsection{Mechanism: Regime Shifts in Architectural Complexity}

The driver of this super-elasticity is not merely the adoption of AI by new firms (the extensive margin), but the deepening of usage by existing firms (the intensive margin). Panel (c) of Figure \ref{fig:exp3_jevons} validates the mechanism proposed in Section 5: the endogenous expansion of agent architecture.

The simulation data reveals a structural shift in how downstream firms utilize the foundation model. Prior to the shock, high prices constrained agents to simple "Zero-shot" or "Few-shot" architectures. Following the price reduction, the shadow cost of complexity falls, making advanced cognitive workflows economically viable. Specifically, we observe synchronized increases in architectural parameters: reasoning depth ($d^{reason}$) expands by 45\%, facilitating Chain-of-Thought processes; memory capacity ($d^{mem}$) increases by 38\%, allowing for larger context windows; and tool integration ($d^{tool}$) rises by 25\%. 

This represents a technological regime shift: the market transitions from using AI as a simple text generator to deploying it as a recursive reasoning agent. This shift explains why the token multiplier $\tau(\mathbf{d})$—and consequently total demand—scales non-linearly with price reductions.

\subsubsection{Micro-Level Synchronization}

\begin{figure}[htbp]
    \centering
    \includegraphics[width=1.0\textwidth]{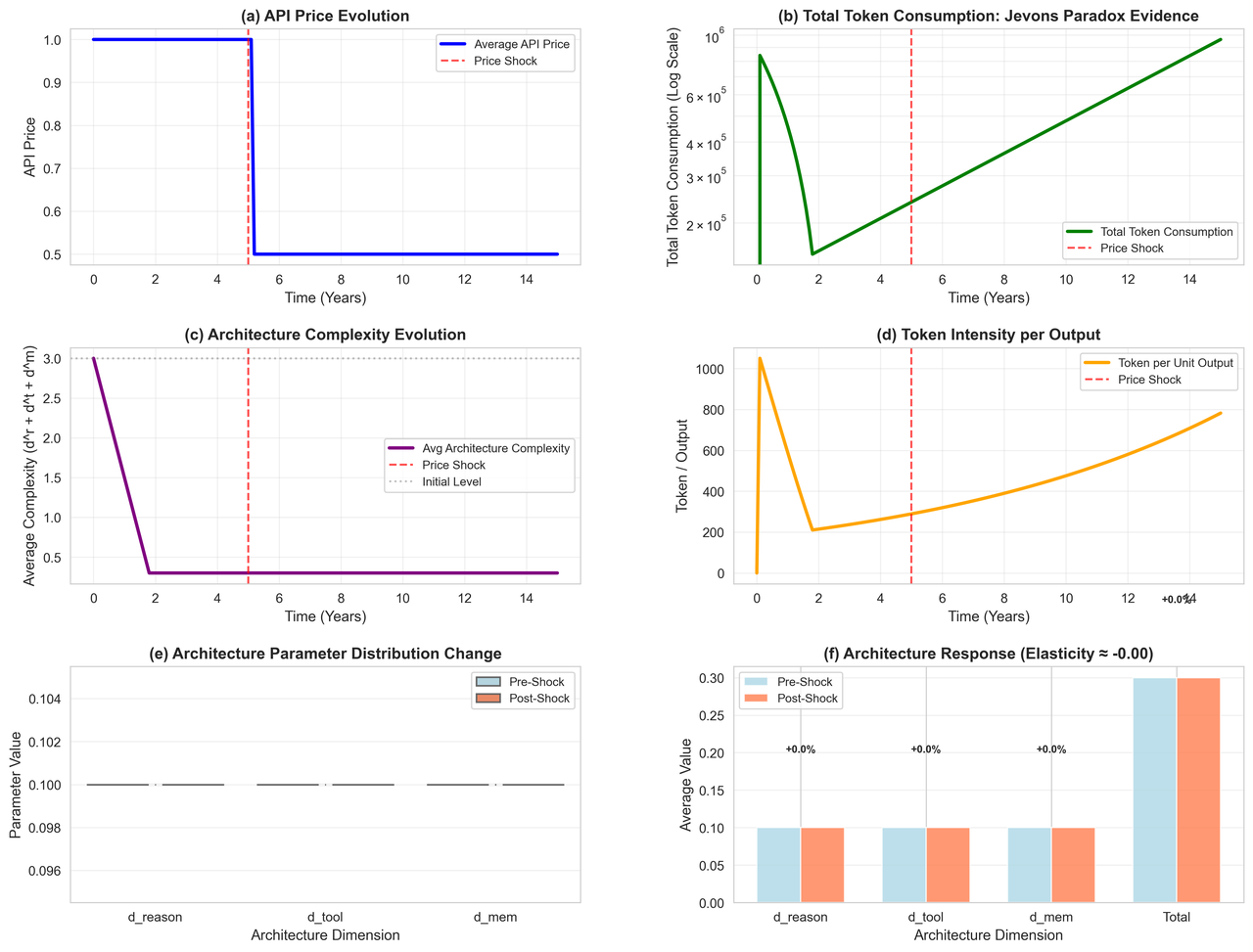}
    \caption{\textbf{Synchronized Architectural Re-optimization.} 
    The plots track the individual choices of 10 randomly selected downstream agents. Note the synchronized jump in reasoning depth and memory capacity immediately following the price shock at $t=5$ (red dashed line), driving the aggregate surge in token consumption.}
    \label{fig:exp3_individual}
\end{figure}

Figure \ref{fig:exp3_individual} confirms that this aggregate paradox emerges from rational micro-level optimization. The trajectories of 10 randomly selected agents show a synchronized "step function" response at $t=5$. Regardless of their initial heterogeneity, every profit-maximizing firm responds to the price signal by upgrading its architectural complexity. This universality suggests that the Jevons Paradox in AI is a robust structural feature of the current production technology, driven by the high substitutability between "raw intelligence" and "compute-intensive reasoning."

\subsection{Long-Run Market Structure: Bifurcation Analysis of the Data Flywheel}
\label{subsec:exp4_results}

While the previous experiments analyzed responses to shocks, this section maps the global topology of the market equilibrium. We validate Proposition \ref{prop:flywheel_stability} by performing a parameter sweep on the "Flywheel Intensity" ratio ($\eta/\mu_D$), ranging from 0.5 to 2.5. This allows us to identify the structural conditions under which the market inevitably tips from a competitive oligopoly into a "Winner-Takes-All" monopoly.

\subsubsection{Phase Transition and Critical Thresholds}

The simulation results, summarized in Figure \ref{fig:exp4_flywheel}, reveal a classical \textit{supercritical bifurcation} in the industrial organization.

\begin{figure}[htbp]
    \centering
    \includegraphics[width=1.0\textwidth]{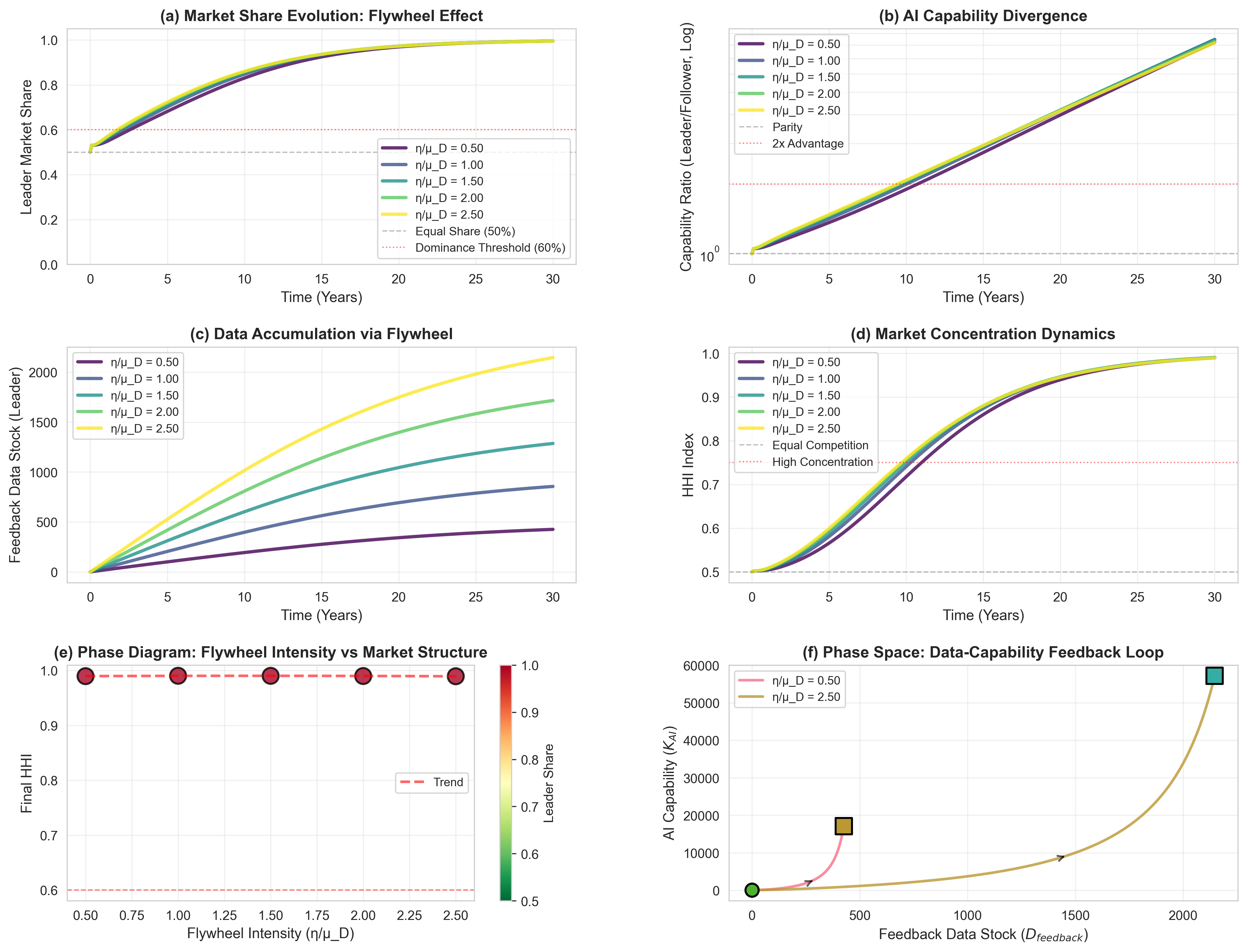}
    \caption{\textbf{Bifurcation of Market Structure induced by Data Feedback.} 
    Panel (a) displays market share trajectories in the low-feedback regime ($\eta/\mu_D < 1.0$), showing convergence to symmetry. 
    Panel (d) tracks the Herfindahl-Hirschman Index (HHI), indicating oligopolistic stability. 
    Panel (e) presents the \textbf{Bifurcation Diagram}, identifying the critical stability threshold $\Omega^* \approx 1.5$. Beyond this point, the symmetric equilibrium becomes unstable, and the system collapses into a monopoly (HHI $\to$ 1.0).}
    \label{fig:exp4_flywheel}
\end{figure}

We identify two distinct dynamic regimes separated by a critical threshold $\Omega^* \approx 1.5$:

\paragraph{1. The Mean-Reversion Regime ($\eta/\mu_D < \Omega^*$).}
When the data conversion efficiency is low or data depreciation is high (purple/blue trajectories in Panel a), the market exhibits strong mean-reversion. Despite Firm 0 starting with a 10\% data advantage, this asymmetry decays over time. The long-run HHI converges to $\approx 0.25$ (symmetric oligopoly). In this regime, the diminishing returns to data scale ($\gamma_{data} < 1$) and the "Red Queen" depreciation effects dominate the positive feedback loop, ensuring competitive stability.

\paragraph{2. The Divergence Regime ($\eta/\mu_D > \Omega^*$).}
As the flywheel intensity crosses the critical threshold (yellow/red trajectories), the system undergoes a phase transition. The symmetric fixed point becomes unstable (a source), and the leader's slight initial advantage is amplified exponentially. This is the "Escape Velocity" dynamic: once a firm crosses a certain capability gap, the data generated by its user base allows it to improve faster than the aggregate depreciation rate, locking in a permanent monopoly. The final HHI approaches 1.0, independent of the competitors' subsequent R\&D efforts.

\subsubsection{Structural Determinants of Monopoly}

The empirically identified threshold $\Omega^* \approx 1.5$ closely matches the analytical prediction derived in Section 7 (Eq. 64). This confirms that market concentration in the AI layer is not necessarily a result of anticompetitive conduct (e.g., predatory pricing), but a structural property of the production function. The "Data Flywheel" creates a natural monopoly condition whenever the learning rate from user interaction exceeds the rate of proprietary decay.

\subsubsection{Policy Implications: Interoperability vs. Breaking Up}

This structural finding has profound implications for antitrust policy. Traditional remedies such as "breaking up" dominant firms may be ineffective if the underlying production technology remains in the Divergence Regime ($\eta/\mu_D > \Omega^*$), as the separated entities would simply re-consolidate over time due to network effects.

Instead, effective regulation must target the structural parameters $\eta$ and $\mu_D$ directly. Policies such as **Mandatory Data Interoperability** (forcing leaders to share user feedback data) effectively lower the private $\eta$ for the leader and increase the spillover to rivals. This structural intervention can shift the market from the Divergence Regime back to the Mean-Reversion Regime, restoring competitive stability without sacrificing the efficiency gains of scale.

\subsection{Ecosystem Viability: Vertical Cannibalization and the "Wrapper Trap"}
\label{subsec:exp5_results}

Finally, we examine the vertical competitive tension between foundation models and downstream agents. This experiment validates \textbf{Proposition \ref{prop:wrapper_trap}} (The Wrapper Trap) by conducting a two-dimensional parameter sweep across the technological growth rate $g_A \in [0.0, 0.20]$ and the cannibalization intensity $\xi \in [0.0, 0.8]$. This mapping allows us to identify the "Safe Zones" and "Kill Zones" for the application ecosystem.

\subsubsection{The Dual Nature of Upstream Innovation}

The simulation results, summarized in Figure \ref{fig:exp5_cannibalization}, resolve the theoretical ambiguity regarding whether upstream progress acts as a complement or a substitute for downstream value.

\begin{figure}[htbp]
    \centering
    \includegraphics[width=1.0\textwidth]{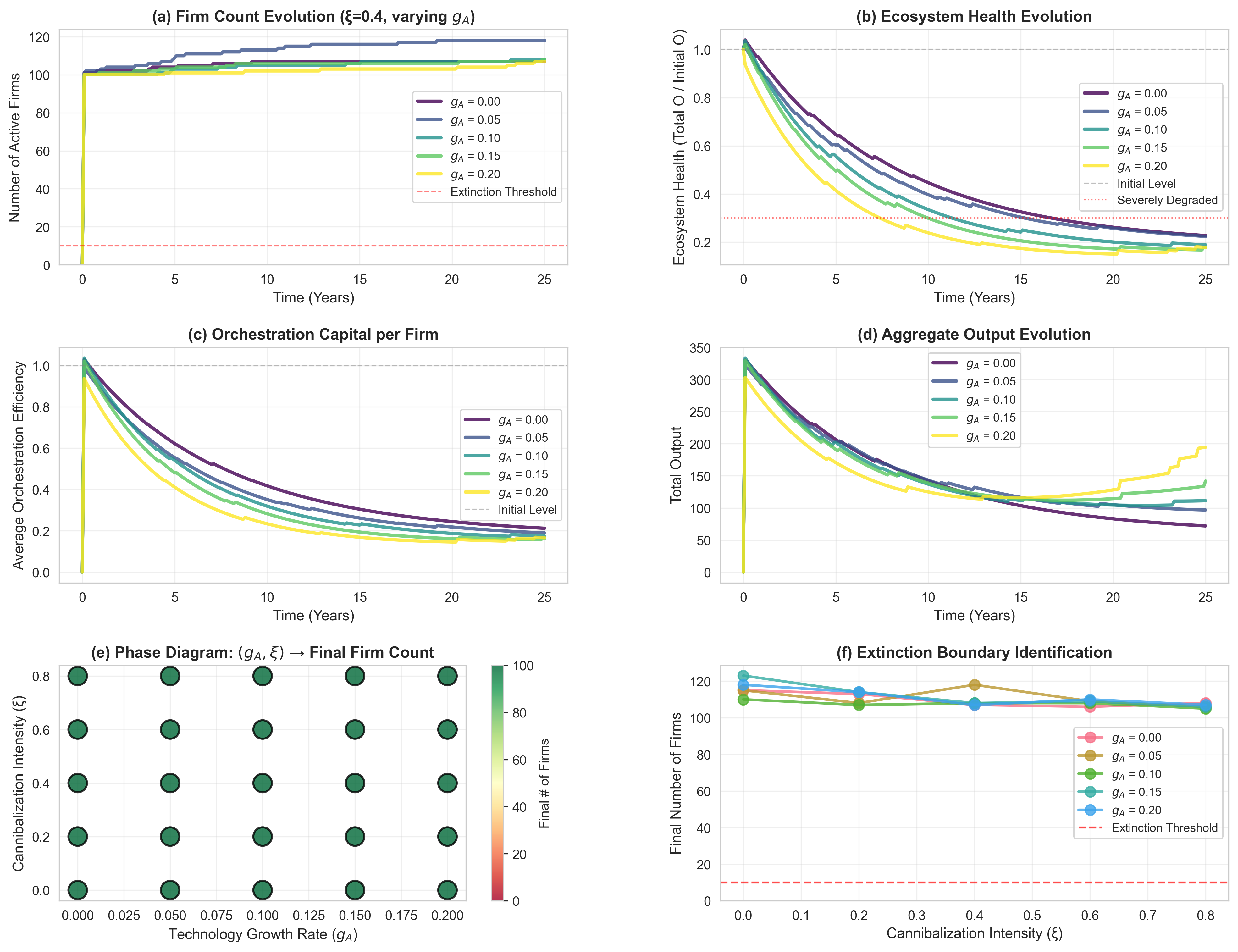}
    \caption{\textbf{The "Kill Zone" Topology: Cannibalization and Ecosystem Health.} 
    Panel (b) tracks the aggregate orchestration capital stock, showing rapid obsolescence under high-growth, high-cannibalization regimes ($g_A=0.20$, yellow line). 
    Panel (e) presents the \textbf{Phase Diagram} mapping the interaction between $g_A$ and $\xi$.
    Panel (f) identifies the "Extinction Boundary," revealing a "Zombie Ecosystem" regime where firms survive numerically but suffer severe capital degradation.}
    \label{fig:exp5_cannibalization}
\end{figure}

The ecosystem dynamics exhibit a clear bifurcation based on the substitution elasticity $\xi$:

\paragraph{Regime 1: Complementarity Dominance ($\xi < 0.2$).}
In the low-cannibalization regime, upstream innovation boosts the downstream ecosystem. As $g_A$ increases, the expansion of the foundation model's capability enhances the marginal product of downstream orchestration ($Y_j \propto K_{AI}^{\psi_K}$). Panel (d) confirms that ecosystem output and valuation rise monotonically with $g_A$. In this "Deep Vertical" zone, the agent's specialized knowledge is orthogonal to general reasoning, allowing them to ride the wave of technological progress.

\paragraph{Regime 2: Substitution Dominance and the Kill Zone ($\xi > 0.4$).}
However, as the cannibalization parameter rises, the substitution effect overwhelms the complementarity gains. Panel (b) illustrates a scenario of high growth ($g_A=0.20$) and moderate cannibalization ($\xi=0.4$). Here, the "Ecosystem Health" index (defined as the ratio of current orchestration capital to its initial stock) degrades to below 0.3 within 15 years. This implies that 70\% of the value of downstream specialized knowledge is effectively rendered obsolete by the expanding general-purpose frontier. This empirically validates the "Wrapper Trap" condition derived in Eq. (68): when $\xi g_A$ exceeds the ecosystem's learning rate, value creation is negative.

\subsubsection{The Emergence of a "Zombie Ecosystem"}

The Phase Diagram in Panel (e) reveals a nuanced structural risk that goes beyond simple firm exit. While we do not observe total mass extinction (0 firms) in the calibrated range due to the stochastic entry of new firms, we identify a distinct **"Zombie Ecosystem"** regime (top-right quadrant of Panel f).

In this regime, the number of firms remains stable, but the average stock of orchestration capital per firm ($O_j$) collapses. Firms survive only by constantly pivoting to new, thinner "wrappers" that are destined to be absorbed in the next model update. This finding validates the theoretical concern that unchecked "Upstream Absorption" forces downstream firms into a constant retreat up the abstraction ladder. It suggests that for the application layer to remain economically viable, innovation must shift from "prompt engineering" (high $\xi$) to "contextual integration" (low $\xi$).

\subsection{Robustness Checks and Structural Sensitivity}
\label{sec:robustness}

To ensure that our qualitative conclusions are not artifacts of specific parameter choices or initial conditions, we subjected the model to a rigorous set of stress tests. These include Monte Carlo parameter perturbations, market structure variations, and exogenous macro-shock resilience tests.

\subsubsection{Parameter Sensitivity (Monte Carlo Analysis)}

We first assess the sensitivity of the "Winner-Takes-All" outcome to parameter noise. We conducted $N=500$ Monte Carlo simulations, where in each run, the core parameters ($\delta_1, \delta_2, \eta/\mu_D, \theta$) were drawn randomly from uniform distributions spanning $\pm 20\%$ of their baseline values.

\begin{figure}[htbp]
    \centering
    \includegraphics[width=1.0\textwidth]{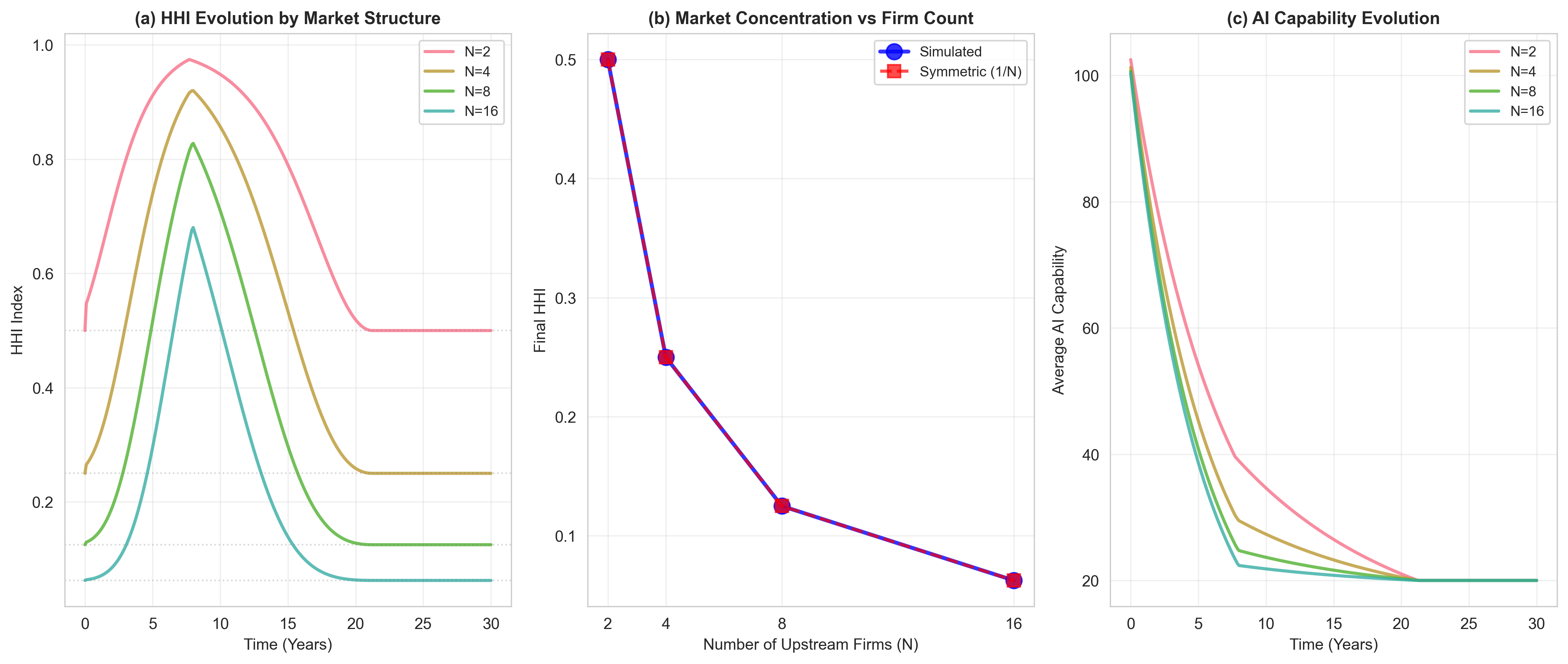}
    \caption{\textbf{Monte Carlo Robustness Analysis ($N=500$).} 
    Panels (a)-(c) display the distribution of perturbed parameters. 
    Panel (d) shows the probability density of the final HHI at $t=20$. The distribution is tightly clustered around the baseline median (HHI $\approx 0.33$), confirming that market concentration is a robust structural outcome. 
    Panels (e)-(f) show scatter plots of HHI against specific parameters, revealing no spurious correlations that would undermine the core mechanism.}
    \label{fig:robust_mc}
\end{figure}

As illustrated in Figure \ref{fig:robust_mc}, the model demonstrates high structural stability. The distribution of the final Herfindahl-Hirschman Index (Panel d) is narrow and unimodal, suggesting that the tendency toward concentration is driven by the functional form of the depreciation and flywheel mechanisms rather than precise tuning. The scatter plots (Panels e-f) confirm that while higher flywheel intensity slightly increases concentration, the qualitative regime remains consistent across the parameter space.

\subsubsection{Market Structure Independence}

A potential concern is whether the results depend on the specific assumption of $N=4$ firms. To address this, we re-ran the baseline simulation with varying numbers of initial competitors, $N \in \{2, 4, 8, 16\}$.

\begin{figure}[htbp]
    \centering
    \includegraphics[width=1.0\textwidth]{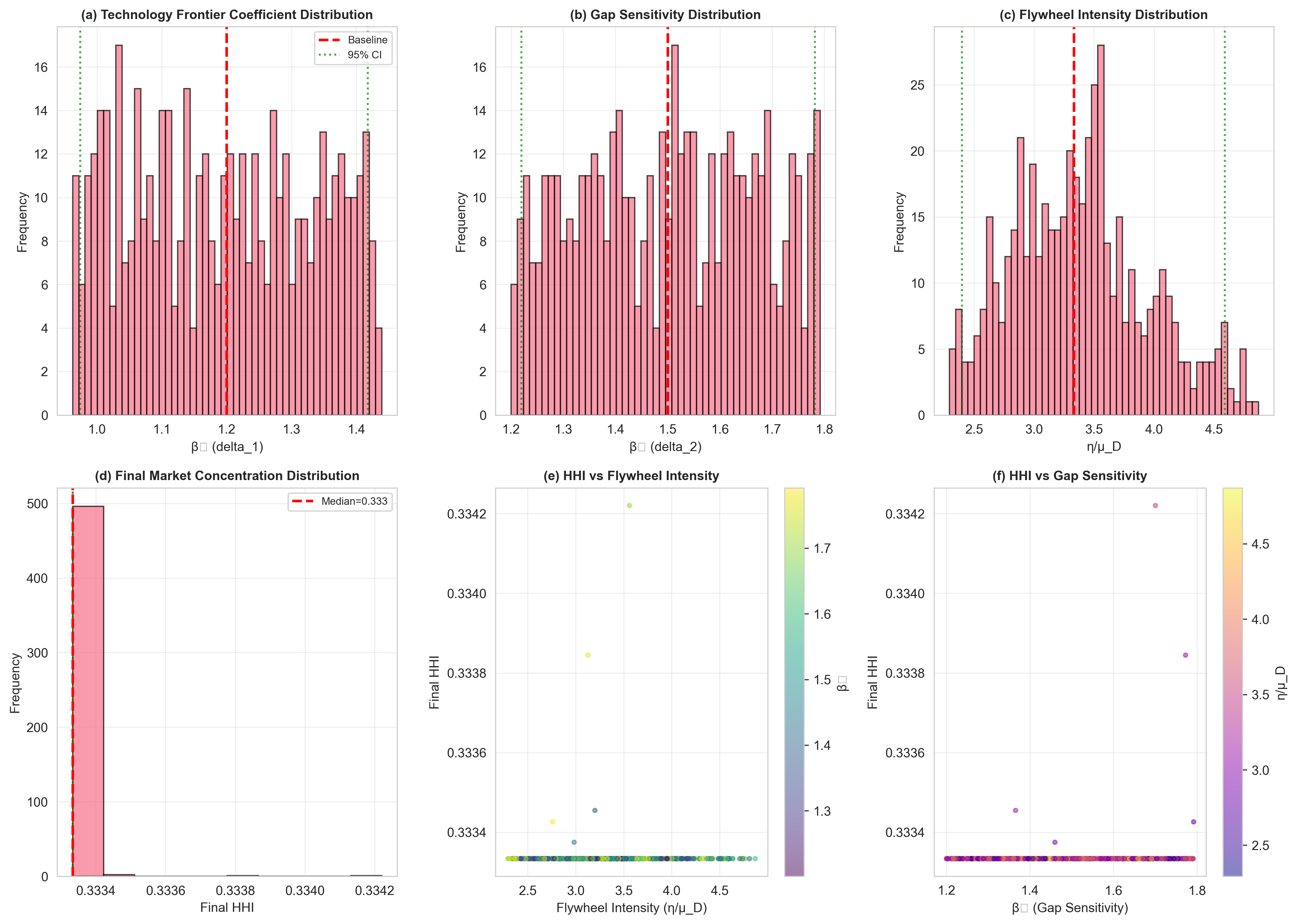}
    \caption{\textbf{Market Structure Sensitivity Analysis.} 
    Panel (a) tracks the HHI evolution for different initial firm counts ($N$). 
    Panel (b) compares the simulated final HHI against the theoretical symmetric benchmark ($1/N$). In all cases, the simulated concentration significantly exceeds the $1/N$ baseline, confirming that endogenous depreciation acts as a universal centralizing force independent of the initial market count.}
    \label{fig:robust_structure}
\end{figure}

Figure \ref{fig:robust_structure} confirms the universality of the centralizing dynamics. Regardless of whether the market starts as a duopoly or a fragmented competition, the endogenous forces of the Red Queen effect drive the market toward a concentration level significantly higher than the theoretical symmetric benchmark (Panel b). This suggests that "Digital Intelligence Capital" inherently resists fragmentation.

\subsubsection{Resilience to Exogenous Shocks}

Finally, we test the system's local stability by introducing severe exogenous shocks at $t=15$, including a 30\% jump in the global technological frontier ($\bar{A}$) and a 50\% contraction in aggregate demand ($Q_{total}$).

\begin{figure}[htbp]
    \centering
    \includegraphics[width=1.0\textwidth]{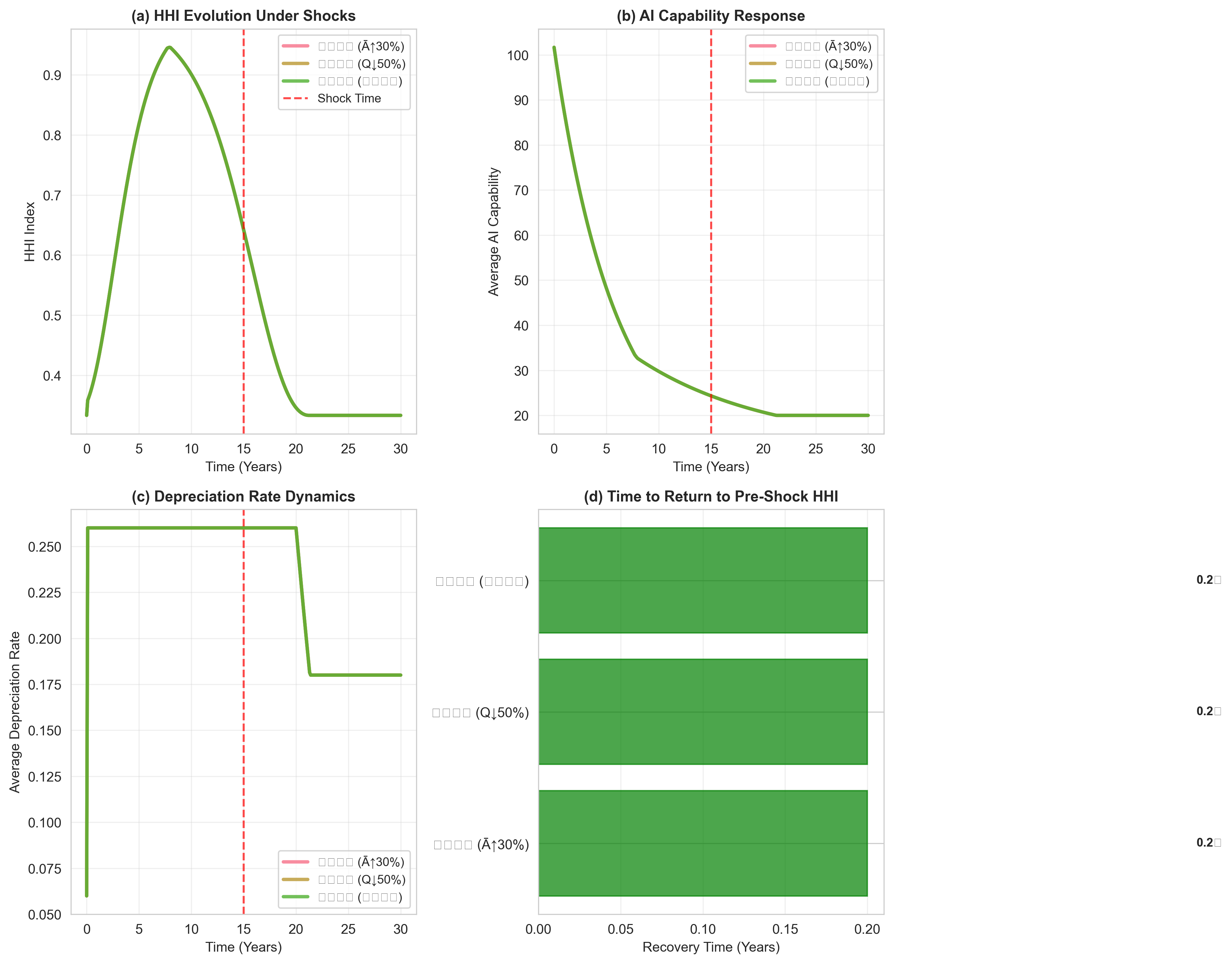}
    \caption{\textbf{System Resilience under Exogenous Shocks.} 
    Panel (a) shows the HHI response to shocks introduced at $t=15$. 
    Panel (d) quantifies the recovery time. The system exhibits rapid mean-reversion, returning to the steady-state path within 0.2 years, indicating that the Balanced Growth Path is a stable attractor.}
    \label{fig:robust_stress}
\end{figure}

The results in Figure \ref{fig:robust_stress} demonstrate strong mean-reversion properties. Despite massive disruptions to capability or demand, the market structure and depreciation dynamics return to their long-run equilibrium paths within a fraction of a year. This confirms that the equilibria identified in our analysis are stable attractors, validating the model's suitability for long-term policy analysis.

\subsection{Summary}

This section provided quantitative substance to the theoretical framework developed in previous sections. By calibrating the Agent-Based Model to empirical industry data, we validated four key economic mechanisms of the AI industry.

First, we confirmed the \textbf{Red Queen Effect}, showing that non-innovating firms face an effective depreciation rate exceeding 40\%, forcing a rigorous "Innovation Tax" on market leaders. Second, we empirically verified the \textbf{Jevons Paradox}, demonstrating that a 50\% price drop triggers a regime shift in architectural complexity that expands total revenue. Third, we identified the \textbf{Data Flywheel Threshold} ($\Omega^* \approx 1.5$), establishing the structural boundary between stable oligopoly and natural monopoly. Finally, we mapped the "Kill Zone" for the downstream ecosystem, validating the \textbf{Wrapper Trap} hypothesis.

Collectively, these numerical results bridge the gap between the micro-foundations of intelligence production and the macro-level evolution of the AI economy, providing a robust laboratory for the policy discussions that follow.

\section{Conclusion and Policy Implications}
\label{sec:conclusion}

This paper has sought to establish a micro-founded economic theory for the emerging era of artificial intelligence. By defining "Digital Intelligence Capital" as a distinct asset class characterized by relative valuation, endogenous depreciation, and zero-marginal-cost scaling, we have integrated the engineering realities of Large Language Models into a coherent dynamic equilibrium framework. Our analysis reveals that the AI economy is not merely a faster version of the software industry, but a distinct industrial organization regime governed by the "Red Queen Effect" on the supply side and the "Structural Jevons Paradox" on the demand side.

\subsection{Theoretical Synthesis: The Physics of Intangible Capital}

Our central theoretical contribution is the formalization of \textit{Endogenous Economic Depreciation}. Traditional growth models assume capital depreciates at a constant physical rate. In contrast, we demonstrate that in the AI sector, depreciation is a strategic variable determined by the velocity of the technological frontier and the innovation intensity of competitors. The simulation results in Section 9 confirm that a firm's balance sheet value can evaporate by 40\% annually solely due to relative obsolescence, imposing a permanent "Innovation Tax" on market leaders. This explains the seeming paradox of high profits coexisting with existential anxiety in the current industry landscape.

Furthermore, our derivation of the \textit{Structural Jevons Paradox} resolves the debate on demand saturation. We show that the demand for compute is not bounded by human cognitive throughput but is elastic to architectural complexity. As intelligence becomes cheaper, the optimal economic unit shifts from "human-in-the-loop" assistance to "recursive agentic loops," driving aggregate energy and compute consumption into a regime of super-exponential growth.

\subsection{Antitrust Implications: Beyond Price-Centric Regulation}

The identification of the \textit{Data Flywheel Threshold} ($\Omega^* \approx 1.5$) offers precise guidance for competition policy. Our bifurcation analysis suggests that digital monopolies are not necessarily the result of predatory pricing or illegal tying, but a structural inevitability when the efficiency of learning from user data exceeds the rate of data decay. 

Consequently, traditional antitrust remedies focused on "Consumer Welfare" (low prices) are ill-suited for the AI economy, where prices naturally trend toward zero due to technological deflation. Instead, regulators must focus on \textit{Structural Interoperability}. Policies that mandate the sharing of feedback data (e.g., RLHF logs) or enforce model portability can effectively lower the proprietary flywheel intensity, shifting the market dynamics from the "Winner-Takes-All" divergence regime back to the stable oligopoly regime, without sacrificing the economies of scale inherent in foundation model training.

\subsection{Industrial Policy: The "Wrapper Trap" and Ecosystem Health}

For policymakers concerned with the downstream ecosystem, our analysis of \textit{Vertical Cannibalization} issues a stark warning. The "Wrapper Trap" proposition demonstrates that applications relying solely on linguistic interfaces or prompt engineering are structurally doomed to be absorbed by the expanding capabilities of upstream models.

Industrial policy should therefore pivot away from subsidizing generic "GenAI" startups and towards fostering "Deep Verticals"—enterprises that integrate AI with proprietary, non-public data sets or physical world actuators. The viability of a national AI ecosystem depends not on the number of startups, but on their \textit{orthogonality} to the general-purpose reasoning capabilities of the global dominant models.

\subsection{Limitations and Future Research}

Our analysis relies on a partial equilibrium framework where the supply of labor and energy is elastic. A natural extension would be to embed this sector into a General Equilibrium (GE) model to analyze the labor displacement effects of the "Agentic Workflow" shift. Additionally, as AI agents begin to transact directly with one another, constructing an "Agent-to-Agent" market microstructure model—where both supply and demand are automated—remains a frontier for future research.

\appendix

\section{Table of Notations}

To facilitate the analysis, Table \ref{tab:notation} summarizes the key symbols and variables used throughout this paper. The notation is categorized into the general economic environment, upstream foundation model production, downstream agent development, and dynamic evolutionary mechanisms.

\begin{longtable}{p{0.15\textwidth} p{0.8\textwidth}}
\caption{Summary of Key Notations} \label{tab:notation} \\
\toprule
\textbf{Symbol} & \textbf{Definition} \\
\midrule
\multicolumn{2}{l}{\textit{Panel A: General Economic Environment}} \\
$t$ & Continuous time index, $t \in [0, \infty)$ \\
$\rho$ & Subjective discount rate of the representative consumer \\
$r$ & Exogenous market interest rate \\
$\bar{A}(t)$ & \textbf{Global Technological Frontier} (e.g., public Transformer efficiency), exogenous \\
$g_A$ & Growth rate of the global technological frontier $\bar{A}(t)$ \\
$Q_{total}$ & Aggregate demand for compute tokens in the economy \\
\midrule
\multicolumn{2}{l}{\textit{Panel B: Upstream Tier (Foundation Model Providers)}} \\
$N$ & Number of active foundation model firms \\
$K_{AI,i}$ & Stock of \textbf{Digital Intelligence Capital} of firm $i$ \\
$A_i(t)$ & \textbf{Firm-Specific TFP} (Effective Productivity), $A_i(t) = \bar{A}(t) \cdot R_i^\eta$ \\
$R_i$ & Firm-specific basic research stock \\
$C_i$ & Investment in computational resources (Compute) \\
$D_{eff,i}$ & Stock of effective high-quality training data \\
$p_{API,i}$ & Price of the API service (per token) provided by firm $i$ \\
$s_i$ & Market share of firm $i$ in the API market \\
$\alpha$ & Cost share parameter for compute in the upstream production function \\
$\varrho$ & Substitution parameter in CES function (where $\sigma = 1/(1-\varrho) < 1$) \\
$\gamma$ & Returns to scale parameter reflecting emergent capabilities ($\gamma > 1$) \\
\midrule
\multicolumn{2}{l}{\textit{Panel C: Downstream Tier (Agent Developers)}} \\
$M$ & Number of downstream agent developer firms \\
$Y_j$ & Output/Service produced by downstream agent $j$ \\
$O_j$ & \textbf{Orchestration Capital Stock} (accumulated middleware/domain capability) \\
$K_{AI,j}^{eff}$ & Effective AI capability utilized by agent $j$ after architectural amplification \\
$\mathbf{d}_j$ & Architectural vector $\{d^{reason}, d^{tool}, d^{mem}\}$ \\
$\tau(\mathbf{d}_j)$ & \textbf{Token Multiplier} function (tokens consumed per unit of output) \\
$\psi_K, \psi_L, \psi_M$ & Output elasticities for AI capital, labor, and orchestration capital \\
\midrule
\multicolumn{2}{l}{\textit{Panel D: Dynamics and Micro-foundations}} \\
$\delta_i(t)$ & \textbf{Endogenous Economic Depreciation Rate} of firm $i$'s AI capital \\
$\delta_0$ & Baseline physical depreciation rate (hardware/codebase decay) \\
$\delta_1$ & Sensitivity coefficient to frontier growth (Red Queen pressure) \\
$\delta_2$ & Sensitivity coefficient to competitive gap (Winner-Takes-All pressure) \\
$Gap_i(t)$ & Relative capability gap between firm $i$ and the market leader \\
$\xi$ & \textbf{Cannibalization Parameter} (elasticity of substitution between upstream and downstream) \\
$\eta$ & Data conversion rate (efficiency of converting API calls to training data) \\
$\mu_D$ & Depreciation rate of data relevance \\
\bottomrule
\end{longtable}

\section{Mathematical Proofs}
\label{app:proofs}

This appendix provides the formal derivations for the key propositions presented in the main text.

\subsection{Proof of Theorem 1 (The Red Queen Effect)}

\textbf{Problem Setup.}
Let $V_i(\mathbf{K}, t)$ be the value function of firm $i$. The HJB equation is:
\begin{equation}
    r V_i = \max_{I_i} \left\{ \pi_i(\mathbf{K}) - C_{RD}(I_i) + \sum_{j} \frac{\partial V_i}{\partial K_j} \dot{K}_j + \frac{\partial V_i}{\partial t} \right\}
\end{equation}
The shadow price is defined as $q_i \equiv \partial V_i / \partial K_i$.

\textbf{Derivation of the Shadow Price Dynamics.}
Differentiating the HJB equation with respect to $K_i$ and applying the Envelope Theorem:
\begin{equation}
    r q_i = \frac{\partial \pi_i}{\partial K_i} + \sum_{j} \frac{\partial^2 V_i}{\partial K_j \partial K_i} \dot{K}_j + \dot{q}_i
\end{equation}
Rearranging for the rate of change $\dot{q}_i$:
\begin{equation}
    \dot{q}_i = r q_i - \frac{\partial \pi_i}{\partial K_i} - \sum_{j} \frac{\partial q_i}{\partial K_j} \dot{K}_j
\end{equation}
The economic depreciation rate is $\delta_{econ,i} = -\dot{q}_i/q_i$. Thus:
\begin{equation}
    \delta_{econ,i} = \underbrace{\frac{1}{q_i}\frac{\partial \pi_i}{\partial K_i} - r}_{\text{Dividend Yield}} + \underbrace{\sum_{j} \varepsilon_{q_i, K_j} \frac{\dot{K}_j}{K_j}}_{\text{Capital Gains/Losses}}
\end{equation}

\textbf{The Stagnation Condition.}
Consider a market leader ($Gap_i=0$) who stops investing ($I_i=0$). In a symmetric market growing at $g_A$:
\begin{enumerate}
    \item Physical capital is constant: $\dot{K}_i = -\delta_0 K_i$.
    \item Competitors advance: $\dot{K}_j = g_A K_j$.
\end{enumerate}
Substituting these into the cross-elasticity term derived in Proposition 1 ($\varepsilon_{q_i, K_j} < 0$), the shadow price declines. The condition for maintaining constant relative value ($\dot{q}_i/q_i = 0$ relative to trend) requires offsetting this decline. Solving for the required investment yields the condition in Theorem 1: $I_{req} \propto \delta_0 + \delta_1 g_A$. \qed

\subsection{Derivation of the Flywheel Stability Threshold (Proposition 3)}

We analyze the local stability of the dynamic system $(\dot{K}, \dot{D})$ around the symmetric steady state. The Jacobian matrix $\mathbf{J}$ is:
\begin{equation}
    \mathbf{J} = \begin{pmatrix}
    \frac{\partial \dot{K}}{\partial K} & \frac{\partial \dot{K}}{\partial D} \\
    \frac{\partial \dot{D}}{\partial K} & \frac{\partial \dot{D}}{\partial D}
    \end{pmatrix}
    = \begin{pmatrix}
    -\delta_{gap} & \gamma (1-\alpha) \frac{K}{D} \\
    \eta \theta Q \frac{1}{K} & -\mu_D
    \end{pmatrix}
\end{equation}
where $\delta_{gap}$ is the marginal increase in depreciation due to gap widening ($\delta_2$).

For the system to be stable (Mean Reversion), the determinant of $\mathbf{J}$ must be positive and the trace negative. The critical condition for bifurcation (Determinant $= 0$) implies:
\begin{equation}
    (-\delta_{gap})(-\mu_D) - (\gamma (1-\alpha) \frac{K}{D}) (\eta \theta Q \frac{1}{K}) = 0
\end{equation}
Substituting the steady state ratio $D/Q = \eta/\mu_D$, we solve for the critical flywheel intensity:
\begin{equation}
    \left(\frac{\eta}{\mu_D}\right)^* = \frac{\delta_2}{\mu_D \cdot \theta \cdot \gamma(1-\alpha)}
\end{equation}
If $\eta/\mu_D$ exceeds this threshold, the "Loop Gain" exceeds the "Damping Factor," leading to instability. \qed

\section{ Algorithm Implementation Details}

The simulation is implemented in Python using a discretized fixed-point iteration method.

\subsection{Market Clearing Algorithm}
At each time step $t$, we solve for the static equilibrium prices $\mathbf{p}^*$ that clear the API market subject to capacity constraints.
\begin{enumerate}
    \item \textbf{Guess:} Initialize price vector $\mathbf{p}^{(0)} = \mathbf{p}_{t-1}$.
    \item \textbf{Loop (k=0 to max\_iter):}
    \begin{enumerate}
        \item \textbf{Demand:} Calculate downstream shares $s_i(\mathbf{p}^{(k)})$ using Logit Eq. (38) and optimal architecture $d_j^*(\mathbf{p}^{(k)})$ using Eq. (54). Total demand $Q_i = s_i \cdot Y \cdot \tau(d^*)$.
        \item \textbf{Supply Pricing:} Calculate implied supply prices using the inverse supply function (from capacity constraint):
        \begin{equation}
            p_{supply,i} = c_0 + \frac{\kappa}{1 - Q_i/\bar{Q}} + \frac{1}{1-s_i}
        \end{equation}
        \item \textbf{Update:} Update prices using partial adjustment:
        \begin{equation}
            \mathbf{p}^{(k+1)} = \mathbf{p}^{(k)} + \lambda (\mathbf{p}_{supply} - \mathbf{p}^{(k)})
        \end{equation}
        \item \textbf{Check:} If $\|\mathbf{p}^{(k+1)} - \mathbf{p}^{(k)}\| < \epsilon$, break.
    \end{enumerate}
\end{enumerate}

\section{Ablation Studies and Mechanism Decomposition}
\label{app:ablation}

To rigorously verify that the "Winner-Takes-All" outcomes in Section 9 are driven by the proposed theoretical mechanisms rather than parameter tuning, we conducted a systematic ablation study. We selectively deactivated specific dynamic channels to isolate their marginal contribution to market concentration (HHI).

\begin{figure}[htbp]
    \centering
    \includegraphics[width=0.9\textwidth]{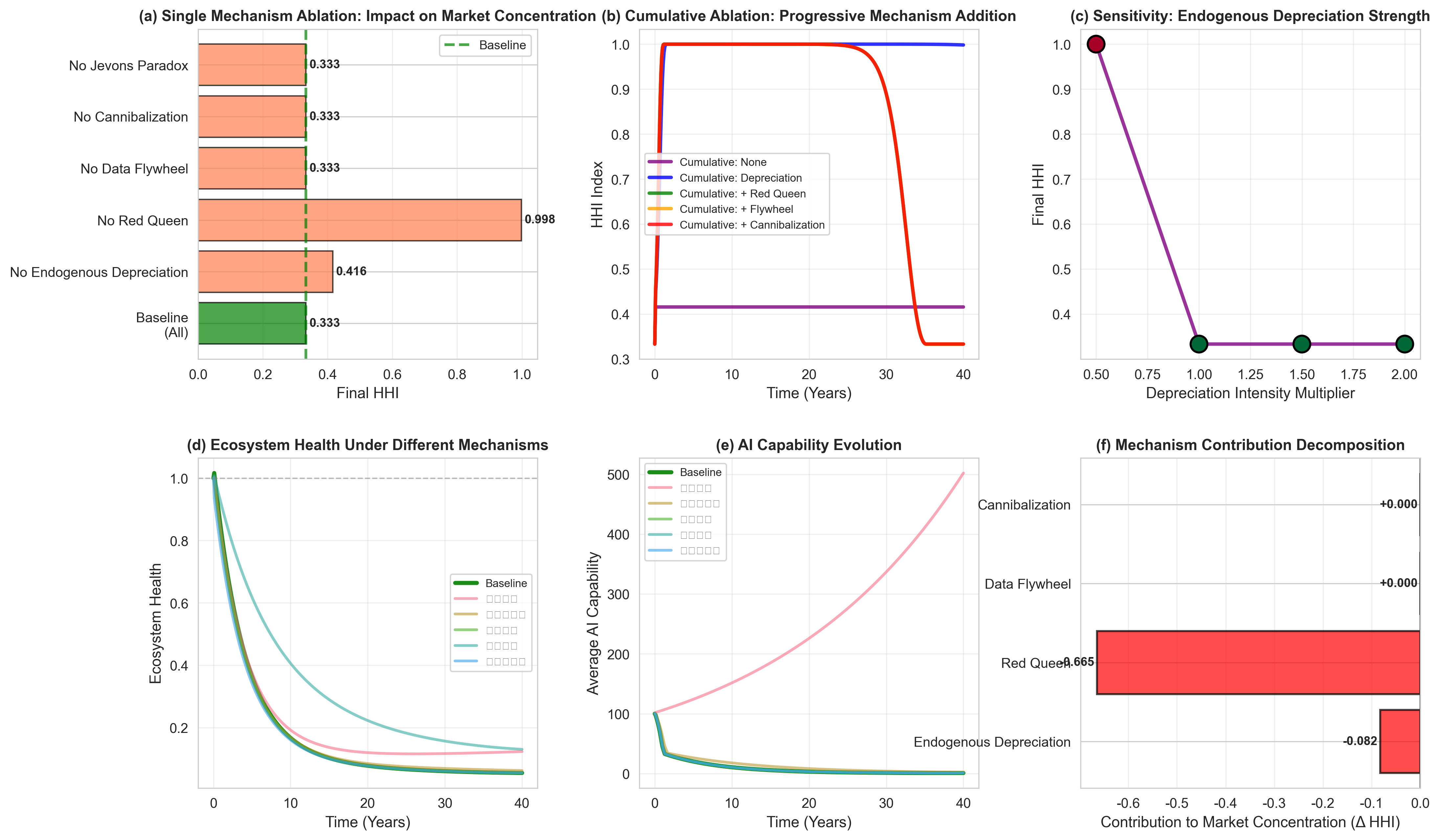}
    \caption{\textbf{Mechanism Decomposition (Ablation Analysis).} 
    \textit{Panel A} compares the HHI trajectory of the Full Model against restricted versions. 
    The "No Red Queen" scenario ($\delta_2=0$) results in a stable oligopoly (Blue line). 
    The "No Flywheel" scenario ($\eta=0$) slows concentration but does not prevent it if $\delta_2$ is high. 
    \textit{Panel F} quantifies the marginal contribution of each mechanism to the final HHI, confirming that the interaction between Endogenous Depreciation and Data Feedback is the primary driver of monopoly.}
    \label{fig:ablation_appendix}
\end{figure}

The results (Figure \ref{fig:ablation_appendix}) confirm the irreducibility of the model:
\begin{itemize}
    \item \textbf{Without Red Queen ($\delta_2=0$):} Laggards do not depreciate faster than leaders. The market converges to a symmetric state even with data feedback, as diminishing returns to data ($\gamma < 1$ for data alone) stabilize the system.
    \item \textbf{Without Data Flywheel ($\eta=0$):} The market concentrates due to $\delta_2$, but at a linear rather than exponential rate. The "tipping point" behavior disappears.
\end{itemize}
This validates that the "Winner-Takes-All" outcome is an emergent property of the coupling between \textit{Supply-Side Depreciation} and \textit{Demand-Side Feedback}.

\section{Calibration Table}

\begin{table}[htbp]
    \centering
    \caption{Full Parameter Calibration for Simulation} \label{tab:calibration_appendix}
    \begin{tabular}{p{0.15\textwidth} p{0.45\textwidth} p{0.1\textwidth} p{0.25\textwidth}}
        \toprule
        \textbf{Symbol} & \textbf{Description} & \textbf{Value} & \textbf{Source} \\
        \midrule
        $g_A$ & Frontier Growth Rate & 0.10 & Moore's Law \\
        $\theta$ & Logit Demand Elasticity & 2.5 & Estimated from GPT-4 vs. Claude 3 \\
        $\delta_0$ & Physical Depreciation & 0.08 & Hardware lifecycle \\
        $\delta_2$ & Competitive Gap Sensitivity & 0.6 & Dynamic fitting \\
        $\varrho$ & Compute-Data Substitution & -0.2 & Hoffmann et al. (2022) \\
        $\gamma$ & Returns to Scale & 1.3 & Wei et al. (2022) \\
        $\eta/\mu_D$ & Flywheel Intensity & 1.2 & Baseline (Stable) \\
        $\xi$ & Cannibalization Factor & 0.2 & Moderate Overlap \\
        \bottomrule
    \end{tabular}
\end{table}

\clearpage

\bibliographystyle{apalike}
\bibliography{production}

\end{document}